\def \eps   {{\varepsilon}}
\newtheorem{hypo}{Hypothesis}
\title{Scheduling Lower Bounds via\\ AND Subset Sum}
\author{Amir Abboud\inst{1}, Karl Bringmann\inst{2}\thanks{This work is part of the project TIPEA that has received funding from
the European Research Council (ERC) under the European Union’s Horizon 2020 research and innovation programme (grant agreement No. 850979).}, Danny Hermelin\inst{3}, Dvir Shabtay\inst{3}}
\institute{
IBM Almaden Research Center,\\ San Jose, California, USA\\
\email{amir.abboud@gmail.com}, 
\and
Saarland University and Max Planck Institute for
Informatics,\\ Saarland Informatics Campus, Saarbr\"ucken, Germany\\
\email{bringmann@cs.uni-saarland.de}, 
\and
Ben-Gurion University of the Negev,\\ Marcus Family Campus, Beer Sheva, Israel\\\email{hermelin@bgu.ac.il,dvirs@bgu.ac.il}
}
\date{September 2019}
\begin{document}

\sloppy
\maketitle

\begin{abstract}
Given $N$ instances $(X_1,t_1),\ldots,(X_N,t_N)$ of Subset Sum, the AND Subset Sum problem asks to determine whether all of these instances are yes-instances; that is, whether each set of integers $X_i$ has a subset that sums up to the target integer $t_i$. We prove that this problem cannot be solved in time $\widetilde{O}((N \cdot t_{max})^{1-\eps})$, for $t_{max}=\max_i t_i$ and any $\eps > 0$, assuming the $\forall \exists$ Strong Exponential Time Hypothesis ($\forall \exists$-SETH). We then use this result to exclude $\widetilde{O}(n+P_{max} \cdot n^{1-\eps})$-time algorithms for several scheduling problems on $n$ jobs with maximum processing time $P_{max}$, assuming $\forall \exists$-SETH. These include classical problems such as $1||\sum w_jU_j$, the problem of minimizing the total weight of tardy jobs on a single machine, and $P_2||\sum U_j$, the problem of minimizing the number of tardy jobs on two identical parallel machines.  
\end{abstract}



\section{Introduction}

The Subset Sum problem is one of the most fundamental problems in computer science and mathematics: Given $n$ integers $X=\{x_1,\ldots, x_n\} \subset \mathbb{N}$, and a target value $t \in \mathbb{N}$, determine whether there is a subset of $X$ that sums\footnote{Note that we can ignore any numbers $x_i > t$, so we will assume throughout the paper that $\max(X) \le t$.} to $t$. 
This problem appeared in Karp's initial
list of 21 NP-complete problems~\cite{Karp72}, and entire books have been devoted to it and to its closely related variants~\cite{KPP04book,MT90book}. Most relevant to this paper is the particular role Subset Sum plays in showing hardness for various problems on integers, essentially being the most basic  such problem where hardness arises exclusively from the additive nature of the problem. In particular, in areas such as operations research, Subset Sum plays a similar role to that of 3-SAT, serving as the core problem used in the vast majority of reductions (see \emph{e.g.}~\cite{Brucker2006,ChengSN16,DuL90,Karp72,KovalyovPesch10,Pinedo2008}). 
Many important problems can be shown to be generalizations of Subset Sum (by easy reductions) including scheduling problems, Knapsack, and Bicriteria Shortest Path. The broad goal of this paper is to understand the fine-grained complexity of such important problems, and more specifically whether the complexity of such generalizations is the same as that of Subset Sum or higher.

While Subset Sum (and its generalizations) is NP-hard, it is well-known that it can be solved in pseudo-polynomial time $O(t \cdot n)$ with the classical dynamic programming algorithm of Bellman \cite{bellman1957dynamic}.
Much more recently, this upper bound was improved to $\widetilde{O}(t+n)$~\cite{Bring17,jin2018simple,KoiliarisX17}; this is a significant improvement in the dense regime of the problem, \emph{e.g.} if $t=O(n^2)$ the new algorithms achieve quadratic as opposed to cubic time.
Most recently, in the dense regime the fine-grained complexity of Subset Sum was essentially resolved under the Strong Exponential Time Hypothesis (SETH) by the authors of this paper~\cite{AbboudBHS17} (the same lower bound was previously known under the incomparable Set Cover Conjecture~\cite{Cygan+16}).
SETH \cite{IP2001,IPZ2001} postulates that there is no $O(2^{(1-\eps)n})$-time algorithm for deciding the satisfiability of a $k$-CNF formula, for some $\eps > 0$ independent of~$k$. 
\begin{theorem}[Hardness of Subset Sum~\cite{AbboudBHS17}]
\label{thm:SubsetSum}%
Assuming SETH, there is no $\eps > 0$ and $\delta < 1$  such that Subset Sum on $n$ numbers and target $t$ can be solved in time $O(t^{1-\eps} \cdot 2^{\delta n})$. 
\end{theorem}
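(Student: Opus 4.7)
The plan is to reduce $k$-SAT on $N$ variables to Subset Sum so that a hypothetical $O(t^{1-\eps} \cdot 2^{\delta n})$-time algorithm, with $\eps > 0$ and $\delta < 1$, would solve $k$-SAT in time $O(2^{(1-\mu)N})$ for some $\mu = \mu(\eps,\delta) > 0$, contradicting SETH. The first step is to apply the Sparsification Lemma of Impagliazzo--Paturi--Zane so that we may assume the formula has $M = O_k(N)$ clauses.

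The reduction must produce a Subset Sum instance with $n = c_n N$ and $\log_2 t = c_t N$ satisfying the key inequality
\[
(1-\eps)\,c_t + \delta\, c_n \;<\; 1.
\]
Plugging this into the hypothetical algorithm gives $O(2^{((1-\eps) c_t + \delta c_n)N}) = O(2^{(1-\mu)N})$ with $\mu = 1 - (1-\eps)c_t - \delta c_n > 0$. The reduction itself can start from a Karp-style encoding of the $k$-CNF --- two integers per variable (one per truth value) together with slack integers per clause, all laid out in a base wide enough to prevent carries between adjacent positions --- which guarantees correctness with $n, \log_2 t = O(N)$, but with constants $c_n, c_t$ whose sum is typically much larger than $1$.

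The main obstacle is obtaining the inequality above for \emph{every} $\eps > 0$ and $\delta < 1$. A naive Karp reduction gives $c_n \approx 2$ and $c_t$ a moderate constant, so the inequality already fails once $\delta \ge 1/2$. To push $c_n + c_t$ down toward the information-theoretic boundary of $1$ (essentially imposed by Bellman's $\widetilde O(t+n)$ algorithm combined with meet-in-the-middle), I would develop a more compact encoding: for instance, using one integer per variable together with an offset trick to handle negative literals, or routing the reduction through a structured intermediate problem in the spirit of the Cygan et al.\ reduction under the Set Cover Conjecture, where the integers encode an entire assignment more succinctly via large target values rather than many small summands. Designing this tight encoding with tunable constants --- so that, for any target pair $(\eps,\delta)$, one can realize $c_n$ and $c_t$ satisfying the inequality while keeping the reduction polynomial and faithful --- is where I expect the bulk of the technical effort to lie.
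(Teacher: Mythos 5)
Your plan centers on a single-instance many-one reduction for which the key inequality $(1-\eps)c_t + \delta c_n < 1$ must hold for every $\eps > 0$ and $\delta < 1$. This is not achievable, and the obstacle is structural rather than a matter of tuning constants. For any such reduction from (sparsified) $k$-SAT on $N$ variables to a single Subset Sum instance with $c_n N$ integers --- even one allowed subexponential reduction time --- composing the reduction with plain meet-in-the-middle solves $k$-SAT in time $2^{c_n N/2 + o(N)}$, so SETH itself already forces $c_n \geq 2 - o(1)$. But then $\delta c_n > 1$ for every $\delta > 1/2$, and your inequality is violated no matter how compactly you encode clauses in the target; the single-instance route is stuck below $\delta = 1/2$, far short of the theorem. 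So the ``bulk of the technical effort'' you defer in your last paragraph is not an engineering problem of shrinking $c_n + c_t$; it is a provably impossible target within the framing you chose.

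The cited proof in~\cite{AbboudBHS17}, whose main technical step is restated in the present paper as Theorem~\ref{thm:redsatsubsetsum}, escapes this barrier by not reducing to a single instance at all: $k$-SAT on $n$ variables is reduced to a \emph{disjunction} of $2^{\eps_0 n}$ Subset Sum instances, each with $O(n)$ integers and target only $2^{(1+\eps_0)n}$. In that Turing-style reduction the exponential enumeration over instances carries the load that your $2^{\delta n}$ factor cannot, while the tightly bounded target exponent $1+\eps_0$ is what makes the $t^{1-\eps}$ savings bite when the hypothetical algorithm is run on each small instance. You correctly sense that a naive Karp encoding is far too loose and that the Cygan et al.\ reduction is a relevant precedent, but the essential structural move --- trading one hard instance for an exponential OR of small-target instances --- is absent from your proposal, and without it the inequality you need cannot be established for any $\delta \geq 1/2$.
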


The lower bound given by Theorem~\ref{thm:SubsetSum} translates directly to several generalizations of Subset Sum, but
does this yield tight lower bounds for the generalizations? Or can we prove higher lower bound for them? To answer this kind of question, the OR Subset Sum problem was introduced in~\cite{AbboudBHS17}: Given $N$ instances $(X_1,t_1),\ldots, (X_N,t_N)$ of Subset Sum, determine whether at least one of these instances is a yes-instance; that is, whether there exists an $i \in \{1,\ldots,N\}$ such that $X_i$ contains a subset that sums up to~$t_i$. While it seems natural to assume that no algorithm can solve this problem faster than solving each of the $N$ Subset Sum instances independently, it is not clear how to prove this. In fact, an $O(N^{1/10} \cdot \max_i t_i)$ time algorithm for this problem does not directly break the lower bound for Subset Sum. Nevertheless, one can still show a tight lower bound by taking a somewhat indirect route: SAT does have a reduction to its OR variant, and then Theorem~\ref{thm:SubsetSum} allows us to reduce OR SAT to OR Subset Sum. 
\begin{theorem}[Hardness of OR Subset Sum~\cite{AbboudBHS17}]
\label{thm:ORSubsetSum}%
Assuming SETH, there are no $\eps,\delta > 0$ such that there is an $O(N^{1+\delta-\eps})$ time algorithm for the following problem: Given $N$ Subset Sum instances, each with $O_{\delta,\eps}(\lg N)$ integers and target $O(N^\delta)$, determine whether one of these instances is a yes-instances. 
\end{theorem}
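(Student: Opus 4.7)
The plan is to follow the route hinted at in the text: exhaustively branch on a subset of the variables of a $k$-CNF to convert it into an OR-SAT instance, and then apply the reduction underlying Theorem~\ref{thm:SubsetSum} to each branch. Given a $k$-CNF $\phi$ on $m$ variables, I would partition its variable set into $V_1$ of size $\alpha m$ and $V_2$ of size $(1-\alpha)m$, where $\alpha \in (0,1)$ is a constant depending on $\delta$ and $\eps$. Enumerating the $N := 2^{\alpha m}$ partial assignments to $V_1$ yields $N$ residual $k$-CNFs $\phi_1,\ldots,\phi_N$ on the variables of $V_2$, and $\phi$ is satisfiable iff some $\phi_i$ is.

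I would then apply the SAT-to-Subset-Sum reduction implicit in the proof of Theorem~\ref{thm:SubsetSum} to each $\phi_i$. Extracted from that proof, this reduction can be tuned by a parameter $\gamma \in (0,1)$, mapping a $k$-CNF on $m'$ variables to a Subset Sum instance on $O_\gamma(m')$ integers with target $2^{\gamma m'}$, in such a way that an $O(t^{1-\eps'}\cdot 2^{\delta' n})$-time Subset Sum solver would refute SETH. Applying this to each branch with $m' = (1-\alpha)m$ produces $N$ Subset Sum instances with $O_\gamma((1-\alpha)m)$ integers and target $2^{\gamma(1-\alpha)m}$. The aim is to choose $\alpha$ and $\gamma$ so that the parameters match the claim: target $\le O(N^\delta) = O(2^{\delta\alpha m})$, forcing $\gamma(1-\alpha) \le \delta\alpha$, and size $\le O_{\delta,\eps}(\lg N) = O_{\delta,\eps}(\alpha m)$, forcing $O_\gamma(1)\cdot (1-\alpha)/\alpha$ to be a constant depending only on $\delta, \eps$.

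Finally, if OR Subset Sum with these parameters admitted an $O(N^{1+\delta-\eps})$ algorithm, running it on the constructed instance would decide $\phi$ in time $O(2^{\alpha m(1+\delta-\eps)})$ plus polynomial overhead. Choosing $\alpha$ strictly below $1/(1+\delta-\eps)$, for instance $\alpha = 1/(1+\delta-\eps/2)$, the bound becomes $O(2^{(1-\eps'')m})$ for some $\eps'' > 0$, contradicting SETH. The main obstacle is the three-way parameter tuning: I must verify that a single pair $(\alpha,\gamma)$ can simultaneously satisfy $\alpha(1+\delta-\eps) < 1$, the target constraint $\gamma \le \delta\alpha/(1-\alpha)$, and the size constraint above. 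Since $\gamma$ may be taken arbitrarily small (at the cost of inflating the $O_\gamma$ constant in the number of integers), a short calculation shows the three constraints are mutually compatible for every $\delta, \eps > 0$ once the precise tradeoff in Theorem~\ref{thm:SubsetSum}'s reduction is pinned down; handling the degenerate regime $\eps \geq 2(1+\delta)$ requires no effort since $N^{1+\delta-\eps} < 1$ there and the theorem is vacuous.
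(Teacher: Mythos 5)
Your variable-splitting step is the right first move and matches the paper's stated route (SAT $\to$ OR SAT $\to$ OR Subset Sum). The gap is in the key tool you invoke: there is no parameterized reduction ``extracted from'' Theorem~\ref{thm:SubsetSum} that maps a $k$-CNF on $m'$ variables to a \emph{single} Subset Sum instance on $O_\gamma(m')$ integers with target $2^{\gamma m'}$ for arbitrarily small $\gamma$. Such a reduction cannot exist unless SETH is false: Bellman's dynamic program solves any such instance in time $O(t \cdot n) = 2^{\gamma m'}\cdot O_\gamma(m')$, which for any fixed $\gamma < 1$ is $2^{(1-\Omega(1))m'}$ and would unconditionally decide $k$-SAT that fast. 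In other words, a single-output SAT-to-Subset-Sum reduction necessarily has target exponent at least $1$, and your claim that ``$\gamma$ may be taken arbitrarily small at the cost of inflating the $O_\gamma$ constant'' is false.

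The actual tool from~\cite{AbboudBHS17}, quoted verbatim in this paper as Theorem~\ref{thm:redsatsubsetsum}, maps a $k$-CNF on $n$ variables to $2^{\eps_0 n}$ Subset Sum instances (not one), each with $O(n)$ integers and target at most $2^{(1+\eps_0)n}$, such that the formula is satisfiable iff at least one is a yes-instance. This changes the bookkeeping in an essential way: after branching on $\alpha m$ variables and applying Theorem~\ref{thm:redsatsubsetsum} to each residual formula, the total number of Subset Sum instances is $N = 2^{(\alpha + \eps_0(1-\alpha))m}$ rather than $2^{\alpha m}$, and the target exponent is $(1+\eps_0)(1-\alpha)$ rather than $\gamma(1-\alpha)$ for tunable $\gamma$. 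One then needs $(1+\eps_0)(1-\alpha) \le \delta(\alpha + \eps_0(1-\alpha))$ for the target bound $O(N^\delta)$, and $(1+\delta-\eps)(\alpha + \eps_0(1-\alpha)) < 1$ for the running-time contradiction; these are simultaneously satisfiable by choosing $\alpha$ slightly above $1/(1+\delta)$ and $\eps_0$ sufficiently small, but that calculation must replace your constraint system. The size bound $O_{\delta,\eps}(\lg N)$ then comes out correctly because each instance has $O(m) = O_{\eps_0}(\lg N)$ integers. So your argument can be repaired, but not with the tool as you stated it.
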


Thus, while Subset Sum admits\footnote{The term $\widetilde{O}()$ is used here and throughout the paper to suppress logarithmic factors.} $\widetilde{O}(n+t)$-time algorithms~\cite{Bring17,jin2018simple,KoiliarisX17}, SETH rules time $\widetilde{O}(N+t)$ for OR Subset Sum. For example, when $N=O(n)$ and $t=O(n^2)$, Subset Sum can be solved in time $O(n^2)$, but OR Subset Sum has a cubic lower bound according to the above theorem. This distinction was used in~\cite{AbboudBHS17} to show a higher lower bound for a generalization of Subset Sum that is a particularly  prominent problem in the operations research community, the Bicriteria Shortest Path problem~\cite{GarroppoGT10,YounisF03}: 
Given a graph~$G$ with edge lengths and edge costs, two vertices $s$ and $t$, and a budget $B$, determine whether there is an $s,t$-path of total length at most $B$ and total cost at most $B$. While Theorem~\ref{thm:SubsetSum} immediately rules out time $B^{1-\eps} \cdot 2^{o(n)}$, it leaves open the possibility of an $\widetilde{O}(B+n)$ algorithm (as is known to exist for Subset Sum). As it turns out, Bicriteria Shortest Path can not only encode a \emph{single} Subset Sum instance, but even \emph{several} instances, and thus Theorem~\ref{thm:ORSubsetSum} yields an $\Omega(n+Bn^{1-\eps})$ lower bound under SETH.  

\subsection{An Analogue of Theorem~\ref{thm:ORSubsetSum} for AND Subset Sum}

While the OR variant in Theorem~\ref{thm:ORSubsetSum} is perfectly suited for showing lower bounds for Bicriteria Shortest Path and other problems of a similar type, there are others, such as the scheduling problems discussed below, whose type can only capture an AND variant: Given $N$ instances of Subset Sum, determine whether \emph{all} are yes-instances. It is natural to wonder whether there is a fine-grained reduction from SAT to AND Subset Sum (either directly or indirectly, by first reducing to AND SAT). Intuitively, the issue is that SAT, Subset Sum, and their OR variants have an $\exists$ quantifier type, while AND SAT and AND Subset Sum have a $\forall \exists$ quantifier type. Reducing one type to another seems very challenging, but fortunately, a morally similar challenge had been encountered before in fine-grained complexity and resolved to some extent as follows.

First, we can observe that the reduction we are looking for is impossible under the Nondeterministic Strong Exponential Time Hypothesis (NSETH)~\cite{CarmosinoGIMPS16} which states that no non-deterministic $O(2^{(1-\eps)n})$-time algorithm can decide whether a given $k$-CNF is unsatisfiable, 
for an $\eps>0$ independent of $k$. This hypothesis was introduced to show non-reducibility results. Intuitively, NSETH says that even though SAT is easy for nondeterministic algorithms its complement is not. Therefore, if for a certain problem both it and its complement are easy for nondeterministic algorithms then a reduction from SAT is impossible. Note that AND SAT, AND Subset Sum, and their complements admit efficient nondeterministic algorithms: to prove that the AND is ``yes'' we can guess a solution in each instance, and (for the complement) to prove that the AND is ``no'' we can guess the index of the instances that is ``no''. (Notice that the latter is not possible for the OR variants.)

There are already conjectures in fine-grained complexity that can capture problems with a $\forall \exists$ type.
In the ``$n^2$ regime'', where SAT is faithfully represented by the Orthogonal Vectors (OV) problem\footnote{Given two sets of $n$ binary vectors of dimension $O(\log{n})$, decide whether there is a vector in the first set and a vector of the second set that are orthogonal. SETH implies that this problem cannot be solved in time $O(n^{2-\eps})$ \cite{williams2005new}, and essentially all SETH-based $n^2$ lower bounds go through this problem.} which has an $\exists$ type, Abboud, Vassilevska Williams and Wang \cite{abboud2016approximation} introduced a hardness hypothesis about the Hitting Set (HS) problem\footnote{Given two sets of $n$ binary vectors of dimension $O(\log{n})$, decide whether for all vectors in the first set there is an orthogonal vector in the second set. The Hitting Set Hypothesis states that this problem cannot be solved in time $O(n^{2-\eps})$ for any $\eps > 0$.} which is the natural $\forall \exists$ type variant of OV.
This hypothesis was used to derive lower bounds that cannot (under NSETH) be based on OV or SETH, e.g. for graph median and radius \cite{abboud2016approximation,ancona2018algorithms,demaine2017fine} and for Earth Mover Distance \cite{rohatgi2019conditional}, and was also studied in the context of model checking problems~\cite{gao2018completeness}.
Going back to the ``$2^n$ regime'', the analogous hypothesis, which implies the HS hypothesis, is the following.

\begin{hypo}[$\forall \exists$-SETH]
There is no $0 < \alpha  < 1$ and $\eps > 0$ such that for all $k \geq 3$ we can decide in time $O(2^{(1 - \eps)n})$, given a $k$-CNF formula $\phi$ on $n$ variables $x_1,\ldots,x_n$, whether for all assignments to $x_1,\ldots,x_{\lceil \alpha\cdot n \rceil}$ there exists an assignment to the rest of the variables that satisfies $\phi$, that is, whether: 
\[ \forall x_1,\ldots,x_{\lceil \alpha\cdot n \rceil} \exists x_{\lceil \alpha\cdot n \rceil+1},\ldots, x_n: \phi(x_1,\ldots,x_n) = \text{true}. \]
\end{hypo}

Note that this hypothesis may also be thought of as the $\Pi_2$-SETH, where $\Pi_2$ is the second level of the polynomial hierarchy, and one can also think of higher levels of the polynomial hierarchy. Indeed, Bringmann and Chaudhury \cite{BringmannC19} recently proposed such a version, called Quantified-SETH, in which we can have any constant number $q\geq 1$ of alternating quantifier blocks, with a constant fraction of the variables in each block\footnote{However, we remark that for the purposes of their paper as well as ours $\forall\exists$-SETH is sufficient; Quantified-SETH is merely mentioned for inspiration. They were motivated by understanding the complexity of the polyline simplification problem from geometry (which turns out to have a $\forall \forall \exists$ type).}.
Non-trivial algorithms for Quantified-SAT exist~\cite{santhanam2014beating}, but none of them can refute even the stronger of these hypotheses. 

It is important to note that while $\forall \exists$-SAT is a strictly harder problem than SAT (as adding more quantifiers can only make the problem harder), in the restricted setting of $\forall\exists$-SETH, where there is a constant fraction of the variables in each quantifier block, the situation is the opposite! 
A faster algorithm for SAT does imply a faster algorithm for $\forall \exists$-SAT: exhaustively search over all assignments to the universally quantified $\alpha n$ variables and for each assignment solve SAT on $(1-\alpha)n$ variables. A reduction in the other direction is impossible under NSETH\footnote{This is analogous to the ``$n^2$ regime'' where HS implies OV but not the other way, assuming NSETH.}.
 Therefore, $\forall \exists$-SETH is a stronger assumption than SETH, which explains why it is helpful for proving more lower bounds, yet it seems equally plausible (to us).
In particular, it gives us a tight lower bound for AND Subset Sum which we will use to show higher lower bounds for scheduling problems.
\begin{theorem}[Hardness of AND Subset Sum]
\label{thm:ANDSubsetSum}%
Assuming $\forall \exists$-SETH, there are no $\eps,\delta > 0$ such that the following problem can be solved in time $O(N^{1+\delta-\eps})$: Given $N$ Subset Sum instances, each with $O(N^\eps)$ integers and target $O(N^\delta)$, determine whether all of these instances are yes-instances. 
%
\end{theorem}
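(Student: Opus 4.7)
The plan is to combine two ingredients: an enumeration step that turns $\forall\exists$-SAT into an AND of ordinary SAT instances, and the flexible SAT-to-Subset-Sum reduction underlying Theorem~\ref{thm:SubsetSum} applied to each conjunct. Fix $\eps, \delta > 0$ for which we wish to rule out an $O(N^{1+\delta-\eps})$ algorithm. Given a $k$-CNF $\phi$ on $n$ variables with the first $\lceil \alpha n \rceil$ variables universally quantified (where $\alpha \in (0,1)$ is to be chosen), I would enumerate all $N := 2^{\lceil \alpha n \rceil}$ assignments to the universal block. Each such assignment produces a residual $k$-CNF $\phi_i$ on $n' := n - \lceil \alpha n \rceil = \Theta(n)$ existential variables, and $\phi$ evaluates to true exactly when every $\phi_i$ is satisfiable; that is, the $N$ residual SAT instances form a natural AND-SAT instance.

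Next, to each residual $\phi_i$ I would apply the parametric reduction from SAT to Subset Sum underlying the proof of Theorem~\ref{thm:SubsetSum} (as worked out in \cite{AbboudBHS17}), which, for any $\lambda > 0$, produces a Subset Sum instance on $O(n'/\lambda)$ integers with target $O(2^{\lambda n'})$. Choosing $\lambda := \delta \alpha/(1-\alpha)$ makes the target $O(2^{\delta \alpha n}) = O(N^\delta)$, as required, while the number of integers is $O((1-\alpha)^2 n/(\delta\alpha)) = O(\log N)$, which is certainly $O(N^\eps)$. Doing this independently for every $i$ produces an AND Subset Sum instance on $N$ instances of exactly the claimed shape, with an overall construction time of $N \cdot \mathrm{poly}(n)$.

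Now I would derive the contradiction. Running the hypothetical $O(N^{1+\delta-\eps})$-time AND Subset Sum algorithm on the constructed instance gives total time
\[
  N \cdot \mathrm{poly}(n) + O\bigl(N^{1+\delta-\eps}\bigr) \;=\; O\bigl(2^{\alpha n(1+\delta-\eps)}\bigr) \cdot \mathrm{poly}(n).
\]
Since $\eps > 0$, I can pick $\alpha \in (0,1)$ with $\alpha(1+\delta-\eps) < 1$ (any $\alpha < 1/\max(1,1+\delta-\eps)$ works), and set $\eps' := 1 - \alpha(1+\delta-\eps) > 0$. This yields an $O(2^{(1-\eps')n})$-time algorithm that decides, for every $k \geq 3$, the $\forall\exists$-version of $k$-SAT with universal fraction $\alpha$, contradicting $\forall\exists$-SETH.

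The only delicate step is the second one: Theorem~\ref{thm:SubsetSum} is stated only as a lower bound, but what is needed here is the \emph{reduction} it is built on, with its explicit trade-off between the number of integers and the size of the target. Extracting and invoking that parametric form of the reduction (rather than treating Theorem~\ref{thm:SubsetSum} as a black box) is the main technical obstacle; the remaining work is the bookkeeping of the exponents that tie $\alpha$, $\lambda$, $\eps$, and $\delta$ together so that each Subset Sum sub-instance meets the stated bounds on the number of integers and the target simultaneously.
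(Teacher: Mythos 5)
Your high-level plan (enumerate the universal block and reduce each residual SAT instance to Subset Sum) is the right skeleton, and it matches the first step of the paper's argument. But the second step, which you flag as the ``only delicate step,'' is where the proposal breaks down, and in a way that cannot be repaired along the lines you sketch. You posit a parametric reduction from a $k$-CNF on $n'$ variables to a \emph{single} Subset Sum instance with $O(n'/\lambda)$ integers and target $O(2^{\lambda n'})$, for any $\lambda>0$. No such reduction can exist for $\lambda<1$: combined with the known $\widetilde{O}(n+t)$-time Subset Sum algorithm~\cite{Bring17,jin2018simple,KoiliarisX17}, it would decide $k$-SAT in time $\widetilde{O}(2^{\lambda n'}+n'/\lambda)=O(2^{(1-\Omega(1))n'})$, refuting SETH outright. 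Your chosen $\lambda=\delta\alpha/(1-\alpha)$ is well below $1$ once $\alpha$ is pushed down to make the final exponent subcritical, so your ingredient is not merely unproven, it is provably false under SETH. Relatedly, your claim that each sub-instance has only $O(\log N)$ integers would immediately resolve an open problem the paper explicitly raises in the paragraph after Theorem~\ref{thm:ANDSubsetSum} (whether the polynomial $O(N^\eps)$ instance size can be brought down to logarithmic).

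What the machinery from~\cite{AbboudBHS17} actually gives (Theorem~\ref{thm:redsatsubsetsum} here) is a reduction from a $k$-CNF on $n'$ variables to an \emph{OR} of $2^{\eps_0 n'}$ Subset Sum instances, each with target about $2^{(1+\eps_0)n'}$. So after the enumeration step you are left with an AND (over assignments to the universal block) of ORs (over the $2^{\eps_0 n'}$ instances), not an AND of Subset Sum instances. The missing idea — and the paper's main technical contribution for this theorem — is a way to collapse each inner OR into a \emph{single} Subset Sum instance with only polynomial blow-up in the target; this is Theorem~\ref{thm:MainReduction}, proved via $m$-average-free sets (Lemma~\ref{lem:average-free}). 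That collapse is what forces the sub-instances to have $N^{\Theta(\eps)}$ integers and makes the exponent bookkeeping (setting $\alpha=1/(1+\delta)$ and $\eps_0$ appropriately) nontrivial. Without it, your construction does not produce an AND Subset Sum instance at all, so there is nothing to hand to the hypothetical $O(N^{1+\delta-\eps})$ algorithm.
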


Note that in comparison with the OR Subset Sum case (Theorem~\ref{thm:ORSubsetSum}), the size of our instances is polynomial $O(N^\eps)$ instead of logarithmic $O_{\delta,\eps}(\log N)$. We leave it as an open problem whether this is inherent or Theorem~\ref{thm:ANDSubsetSum} can be improved.

It follows from Theorem~\ref{thm:ANDSubsetSum} that AND Subset Sum on $N$ instances, each on at most $s$ integers and with target at most $t$, cannot be solved in time $\widetilde{O}(Ns + t (Ns)^{1-\eps})$. We show that the same holds for the Partition problem, which is the special case of Subset Sum where the target is half of the total input sum. This is the starting point for our reductions in the next section.

\begin{corollary} \label{cor:thecor}
Assuming $\forall \exists$-SETH, there is no $\eps > 0$ such that the following problem can be solved in time $\widetilde{O}(Ns + t (Ns)^{1-\eps})$: Given $N$ Partition instances, each with at most $s$ integers and target at most $t$, determine whether all of these instances are yes-instances. 
\end{corollary}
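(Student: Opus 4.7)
The plan is to reduce AND Subset Sum to AND Partition by applying the standard single-element Subset Sum$\to$Partition reduction to each of the $N$ instances independently, and then to contradict Theorem~\ref{thm:ANDSubsetSum} by feeding the hypothetical Partition algorithm with carefully chosen hardness parameters.

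Concretely, given a Subset Sum instance $(X_i,\tau_i)$ with $S_i := \sum_{x \in X_i} x$ (we may assume $0 \le \tau_i \le S_i$), form $X_i' := X_i \cup \{|S_i - 2\tau_i|\}$; then $\sum_{y \in X_i'} y \le 2 S_i$, and a short case analysis (depending on whether a balanced subset of $X_i'$ uses the new element) shows that $X_i'$ admits a balanced partition if and only if $X_i$ contains a subset summing to $\tau_i$. Applying this instance-wise preserves the AND answer, increases the number of integers per instance from $s$ to at most $s+1$, and raises each target to at most $S_i \le s \cdot t$. Hence a hypothetical $\widetilde{O}(Ns' + t'(Ns')^{1-\eps})$-time algorithm for AND Partition (with $s':=s+1$ and $t'\le st$) would translate into an AND Subset Sum algorithm running in time $\widetilde{O}(Ns + st\,(Ns)^{1-\eps})$.

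To derive the contradiction, I would instantiate Theorem~\ref{thm:ANDSubsetSum} with $s := \Theta(N^{\eps_0})$ and $t := \Theta(N^{\delta_0})$, making the induced AND Subset Sum running time
\[ \widetilde{O}\bigl(N^{1+\eps_0} + N^{1+\delta_0+2\eps_0-\eps(1+\eps_0)}\bigr). \]
Setting $\delta_0 := 3\eps_0$ pushes the first term strictly below $N^{1+\delta_0-\eps_0} = N^{1+2\eps_0}$, and additionally requiring $\eps_0 < \eps/(3-\eps)$ does the same for the second term. Any sufficiently small $\eps_0>0$ together with $\delta_0:=3\eps_0$ therefore yields an algorithm beating $N^{1+\delta_0-\eps_0}$, contradicting Theorem~\ref{thm:ANDSubsetSum}. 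The main (minor) subtlety is that the single-element reduction inflates the target by a factor of $s$; this extra factor has to be absorbed by the slack $-\eps\eps_0$ coming from the $(Ns)^{1-\eps}$ term, which is precisely why $\eps_0$ must be chosen small relative to $\eps$. A more wasteful Subset Sum$\to$Partition reduction (say, one that multiplies the target by a super-polynomial function of $s$) would break this balance, so the classical constant-overhead reduction is exactly what is needed.
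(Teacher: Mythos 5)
Your proposal is correct and follows essentially the same approach as the paper: an instance-wise Subset Sum-to-Partition reduction that increases the number of integers by $O(1)$ and blows up the target by a factor of at most $s$, followed by the same kind of parameter tuning that absorbs this factor in the $(Ns)^{1-\eps}$ slack to contradict Theorem~\ref{thm:ANDSubsetSum}. The only cosmetic difference is the gadget --- you add the single element $|S_i - 2\tau_i|$ and case-split on which side of the partition it lands, whereas the paper adds the two large elements $(\sum_{x\in X_i}x)+t_i$ and $2(\sum_{x\in X_i}x)-t_i$ and forces them to opposite sides --- but both give the same $s \mapsto O(s)$, $t \mapsto O(st)$ bookkeeping, and the exponent calculations coincide.
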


\subsection{Scheduling lower bounds}

To exemplify the power of Theorem~\ref{thm:ANDSubsetSum}, we use it to show strong lower bounds for several non-preemptive scheduling problems that generalize Subset Sum. These problems include some of the most basic ones such as minimizing the total weight of tardy jobs on a single machine, or minimizing the number of tardy jobs on two parallel machines. Theorem~\ref{thm:lowerbounds} below lists all of these problems; they are formally defined in Section~\ref{sec:SLB} and each requires a different reduction. To describe the significance of our new lower bounds more clearly, let us focus on only one of these problems, $P_2||\sum U_j$, for the rest of this section.
The input to this problem is a set of $n$ jobs, where each job $J_j$ has a processing time $p_j$ and a due date $d_j$, and the goal is to schedule all jobs on two parallel machines so that the number of jobs exceeding their due dates is minimal. Let $P=\sum_j p_j$ and $P_{max} = \max_j p_j$ denote the sum of processing times and maximum processing time of the input jobs. Observe that $P \leq P_{max} \cdot n$.

The standard dynamic programming algorithm for this problem runs in $O(P \cdot n)=O(P_{max} \cdot n^2)$ time~\cite{LawlerMoore}, and it is not known whether this running time is the best possible. Nevertheless, there is a well-known easy reduction from Subset Sum on numbers $x_1,\ldots,x_n$ to $P_2||\sum U_j$ that generates an instance with total processing time $P=\sum {x_i} = O(n \cdot t)$ and $P_{max} = \max{x_i} = O(t)$. Thus, using Theorem~\ref{thm:SubsetSum}, we can rule out $P^{1-\eps}\cdot 2^{o(n)}$-time and $P_{max}^{1-\eps}\cdot 2^{o(n)}$-time algorithms for $P_2||\sum U_j$. However, this leaves open the possibility of $\widetilde{O}(P_{max} + n)$-time algorithms, which would be near-linear as opposed to the currently known cubic algorithm in a setting where $P_{max}=\Theta(n)$ and $P=\Theta(n^2)$.
One approach for excluding such an upper bound is to first prove the impossibility of an algorithm for Subset Sum with running time $\widetilde{O}(\max_{x \in X}x +n)$. 
However, such a result has been elusive and is perhaps the most interesting open question in this context~\cite{AxiotisBJTW19,EisenbrandW18,GalilMargalit91,KoiliarisX17,Pisinger99}.
Instead, taking an indirect route, we are able to exclude such algorithms with an $\Omega(n+P_{max}n^{1-\eps})$ lower bound under $\forall\exists$-SETH by showing that $P_2||\sum U_j$ can actually encode the AND of several Subset Sum instances. 
In particular, in the above regime we improve the lower bound from linear to quadratic.


\begin{theorem}
\label{thm:lowerbounds}
Assuming $\forall\exists$-SETH, for all $\eps>0$, none of the following problems have $\widetilde{O}(n +P_{max} \cdot n^{1-\eps})$-time algorithms: 
\begin{itemize}
\item $1||\sum w_jU_j$,\; $1|Rej\leq R|\sum U_j$,\; $1|Rej \leq R|T_{max}$,\; and \; $1|r_j \geq 0, Rej \leq R|C_{max}$.
\item $P_2||T_{max}$,\; $P_2||\sum U_j$,\; $P_2|r_j|C_{max}$,\;
and \; $P_2|\text{level-order}|C_{max}$. 
\end{itemize}
\end{theorem}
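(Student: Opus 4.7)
The plan is to reduce AND Partition (Corollary~\ref{cor:thecor}) separately to each of the eight scheduling problems in the statement. For every problem I will build, from $N$ Partition instances (each of size $s$ and target $\le t$), a scheduling instance with $n = \Theta(Ns)$ jobs and $P_{max} = \Theta(t)$ whose optimum encodes whether all $N$ Partitions are YES-instances. A hypothetical $\widetilde{O}(n+P_{max}\cdot n^{1-\eps})$-time algorithm would then solve AND Partition in $\widetilde{O}(Ns + t(Ns)^{1-\eps})$, contradicting Corollary~\ref{cor:thecor}.

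The unifying device is a \emph{staircase} of cumulative targets $D_i := \sum_{k\le i} t_k$, used as due dates, release dates, or precedence frontiers. For $P_2||\sum U_j$, each integer $x$ of the $i$-th instance becomes a job of length $x$ with common due date $D_i$. Since $\sum_i 2t_i = 2D_N$, a simple induction on $i$ shows that no job is tardy iff for every $i$ the jobs of the $i$-th instance sum to exactly $t_i$ on each machine, i.e., all Partitions are YES. The same staircase, with weights $w_j = p_j$, handles $1||\sum w_jU_j$: the maximum on-time weight on a single machine equals $D_N$ iff every Partition is YES, so the optimum $\sum w_jU_j$ equals $D_N$ exactly in the YES case. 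An identical construction solves $P_2||T_{max}$, where $T_{max}=0$ iff all deadlines are met.

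For the makespan variants I replace deadlines by staircase release dates. In $P_2|r_j|C_{max}$ each job of instance $i$ gets release date $D_{i-1}$; the makespan is $D_N$ iff every window $(D_{i-1}, D_i]$ is perfectly filled on both machines, i.e., all Partitions are YES. The $P_2|\text{level-order}|C_{max}$ variant encodes the same staircase through the level structure (level $i$ contains exactly the jobs of the $i$-th Partition), so level-order execution enforces the window structure without release dates. For the rejection variants $1|Rej\le R|\sum U_j$, $1|Rej\le R|T_{max}$, and $1|r_j\ge 0, Rej\le R|C_{max}$, I first symmetrize each Partition by replacing each $x_j$ with a pair of items $(x_j + M, M)$ for $M = \Theta(t)$, so that any valid Partition of the new $2s$-item instance must select exactly $s$ items; setting $R = Ns$ then identifies the rejection budget with the ``other half'' of all partitions across the $N$ instances, and the desired optimum is achieved iff every Partition is YES.

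The main obstacle is keeping the parameters tight: $n = \Theta(Ns)$ and $P_{max} = \Theta(t)$ simultaneously in every reduction. Naively introducing a single huge blocker of length $\gg t$ (a standard trick in hardness reductions for $C_{max}$) would inflate $P_{max}$ to $\Omega(Ns\cdot t)$ and collapse the bound to $\Omega(P\cdot n^{1-\eps})$, which is much weaker and already follows from Theorem~\ref{thm:SubsetSum}. The reductions must therefore either (i) rely purely on the staircase due-date/release-date structure to enforce the window partitioning without long blockers (for $\sum w_jU_j$, $\sum U_j$, and $T_{max}$), or (ii) break any blockers into $O(N)$ short pieces of length $O(t)$ using level-order chains or staggered release dates (for the $C_{max}$ problems). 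The symmetrization for the rejection variants must likewise preserve $\max x_j = O(t)$, which is why $M$ is chosen as $\Theta(t)$ rather than a larger polynomial.
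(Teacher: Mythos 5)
Your constructions for the two-machine problems ($P_2||\sum U_j$, $P_2||T_{max}$, $P_2|r_j|C_{max}$, $P_2|\text{level-order}|C_{max}$) match the paper's reductions essentially verbatim: jobs $p_{i,j}=x_{i,j}$ with staircase due dates, release dates, or precedence classes, and target $T_{max}=0$ (resp.\ $\sum U_j = 0$, $C_{max}=\sum_i t_i$). Those parts are correct.

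However, your reduction for $1||\sum w_jU_j$ has a genuine gap. You set $w_j=p_j$ (uniform per-unit weight) and claim that the maximum on-time weight on one machine equals $D_N$ iff every Partition instance is YES. This is false: with uniform weight the prefix constraints $\sum_{\ell\le i}p(\mathcal{E}_\ell)\le D_i$ only bound $\sum_i p(\mathcal{E}_i)\le D_N$, and this can be tight without forcing $p(\mathcal{E}_i)=t_i$ for each $i$. Concretely, take $N=2$, $X_1=\{5,5,10\}$ with $t_1=10$ (YES), $X_2=\{1,19\}$ with $t_2=10$ (NO), so $D_1=10$, $D_2=20$. Scheduling only $J_{2,1}$ and $J_{2,2}$ consecutively from time $0$ completes both by $20=D_2$ on time, giving on-time weight $1+19=20=D_N$ even though $X_2$ has no balanced partition. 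The on-time jobs simply ``borrow'' all of window $(D_0,D_1]$ for instance $2$. The paper avoids this by using \emph{strictly decreasing instance-dependent weights} $w_{i,j}=(N-i+1)\cdot x_{i,j}$, so that by an Abel-summation argument $w(\mathcal{E})=\sum_i (N-i+1)\,p(\mathcal{E}_i)\le\sum_m D_m=W$ with equality iff $p(\mathcal{E}_i)=t_i$ for all $i$. Your uniform weights are exactly the $m=N$ term of this sum and lose the prefix rigidity.

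This gap propagates to the rejection problems. The paper obtains $1|Rej\le R|\sum U_j$ and $1|Rej\le R|T_{max}$ by a direct reduction from $1||\sum w_jU_j$, and handles $1|r_j,Rej\le R|C_{max}$ with the mirrored weights $w_{i,j}=i\cdot x_{i,j}$; in both cases the instance-dependent weights are what pin down the per-window totals. Your alternative---symmetrize each instance with blocker items $(x_j+M,M)$ and set a global rejection budget $R=Ns$---does not by itself prevent the adversary from spending the budget unevenly (e.g.\ rejecting all $2s$ items of one instance and none of another), and you do not specify a mechanism (weights, due dates, etc.) that forces the budget to be split as $s$ rejections per instance. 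To repair the proof you should replace the uniform weights with the paper's instance-indexed weights (or supply a different mechanism with a rigorous equality argument), and rederive the rejection cases from the corrected $1||\sum w_jU_j$ reduction.
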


All problems listed in this theorem are direct generalizations of Subset Sum, and each one admits a $O(P \cdot n) = O(P_{max} \cdot n^2)$-time algorithm via dynamic programming~\cite{LawlerMoore,Rothkopf1966,ShabtayGK13}. 

We note that the distinction between running times depending on $P$ versus $P_{max}$ and~$n$ relates to instances with low or high variance in their job processing times. In several experimental studies, it has been reported by researchers that the ability of scheduling algorithms to solve NP-hard problems deteriorates when the variance in job processing time increases (see \emph{e.g.}~\cite{Khowala2008,PanShi07,Ragatz93}). Our results  provide theoretical evidence for this claim by showing tighter lower bounds on the time complexity of several scheduling problems based on the maximum processing time~$P_{max}$.


\section{Quantified SETH Hardness of AND Subset Sum}

In the following we provide a proof for Theorem~\ref{thm:ANDSubsetSum}, the main technical result of the paper. For this, we present a reduction from Quantified $k$-SAT to AND Subset Sum which consists of two main steps. The first step uses a tool presented in~\cite{AbboudBHS17} which takes a (non-quantified) $k$-SAT instance and reduces it to subexponentially many Subset Sum instances that have relatively small targets. The second step is a new tool, which we develop in Section~\ref{subsec:OR2One}, that takes many Subset Sum instances and reduces them to a single instance with only a relatively small increase of the output target.   

\subsection{Main construction}

The following two theorems formally state the two main tools that are used in our construction. Note that for our purpose, the important property here is the manageable increase of the output target in both theorems. The proof of Theorem~\ref{thm:redsatsubsetsum} can be found in~\cite{AbboudBHS17}, while the proof of Theorem~\ref{thm:MainReduction} is given in Section~\ref{subsec:OR2One}.

\begin{theorem}[\cite{AbboudBHS17}]
\label{thm:redsatsubsetsum}%
For any $\eps > 0$ and $k \geq 3$, 
given a $k$-SAT formula $\phi$ on $n$ variables, we can in time $2^{\eps n}\cdot n^{O(1)}$ construct $2^{\eps n}$ Subset Sum instances, each with  $O(n)$ integers and target at most $2^{(1+\eps)n}$, such that $\phi$ is satisfiable if and only if at least one of the Subset Sum instances is a yes-instance. 
\end{theorem}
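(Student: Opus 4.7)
The plan is to derive Theorem~\ref{thm:redsatsubsetsum} by a two-stage reduction: first preprocess $\phi$ with the Impagliazzo--Paturi--Zane Sparsification Lemma to bound the clause count, and then encode each sparsified residual into a single Subset Sum instance of controlled target. Applying sparsification at threshold $\eps/3$ rewrites $\phi$ as an OR of at most $2^{(\eps/3)n}$ $k$-CNF formulas on the same $n$ variables, each with $m \leq C(k,\eps) \cdot n$ clauses. From now on I assume $m = O(n)$, at the cost of a $2^{(\eps/3)n}$ factor in the final instance count.

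For each sparsified sub-formula $\psi$, I would then pick an outer block $A$ of $\eps' n = (\eps/3)n$ variables and enumerate all $2^{\eps' n}$ partial assignments $\tau$ to $A$. Plugging $\tau$ into $\psi$ and dropping already-satisfied clauses yields a residual $k$-CNF $\psi_\tau$ on the remaining $(1-\eps')n$ variables. The total number of residuals across all sparsified sub-formulas is at most $2^{(2\eps/3)n}\cdot\mathrm{poly}(n) \le 2^{\eps n}$, and $\phi$ is satisfiable iff some $\psi_\tau$ is. What remains is to reduce each residual $\psi_\tau$ to one Subset Sum instance with $O(n)$ integers and target at most $2^{(1+\eps)n}$, which is the main technical step.

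The principal obstacle is keeping the target this small. The textbook Karp reduction assigns one base-$(k+1)$ digit per clause, yielding target $(k+1)^{n+m} = 2^{\Theta(n\log k)}$, far above $2^{(1+\eps)n}$ already for sparsified inputs. I would follow the compressed digit layout of~\cite{AbboudBHS17}: reserve a high-order block of roughly $(1-\eps')n$ bits for variable-selection digits, and pack all clause-satisfaction bookkeeping into a low-order block of at most $\eps n$ bits by grouping the clauses into buckets. Each bucket is assigned a single composite digit whose base is chosen just large enough to faithfully record the per-clause contributions within the bucket without overflow into the next bucket, and each clause is then recovered individually via carefully offset filler numbers that decouple its ``at-least-one-literal-true'' constraint from its bucket-mates. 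The guiding arithmetic constraint is that (number of buckets)$\times$(bits per bucket) stays below $\eps n$ while (bucket base) exceeds the maximum per-bucket literal mass.

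The delicate part of the verification is soundness of this packed encoding: one must rule out spurious Subset Sum solutions where an over-satisfied clause masks an unsatisfied clause within the same bucket, while simultaneously fitting the total digit length under $(1+\eps)n$. I would carry out a bucket-by-bucket carry analysis, matching the bucket size (a large constant depending on $k$ and $\eps$) against the sparsification constant $C(k,\eps)$ so that every clause's constraint is enforced independently in its own sub-digit and the overall bit count balances exactly. Once this calibration is complete, the resulting $2^{\eps n}$ Subset Sum instances, taken as an OR, are equivalent to $\phi$, and the running time $2^{\eps n}\cdot n^{O(1)}$ for constructing them follows by inspection.
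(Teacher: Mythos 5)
Your proposal stalls exactly at the step that constitutes the actual content of Theorem~\ref{thm:redsatsubsetsum}: how to certify satisfaction of all clauses of the sparsified formula while spending only $O(\eps n)$ bits of the target on bookkeeping. The bucket scheme you describe cannot meet the arithmetic constraint you yourself impose. If, as you require for soundness, every clause is enforced \emph{independently} in its own sub-digit of its bucket's composite digit (to prevent an over-satisfied clause from masking an unsatisfied bucket-mate), then a bucket of $B$ clauses needs base at least $(k+1)^{B}$, i.e.\ about $B\log(k+1)$ bits; summing over buckets gives at least $m\log(k+1)\ge m \geq C(k,\eps)\cdot n$ low-order bits in total, no matter how the $m$ clauses are partitioned. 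Bucketing is only a relabeling of the textbook per-clause digits and saves nothing, so the requirement ``(number of buckets)$\times$(bits per bucket) $<\eps n$'' is unsatisfiable, since the sparsification constant $C(k,\eps)$ is a (large) constant, certainly exceeding $\eps$. Your preliminary enumeration over an outer block of $(\eps/3)n$ variables does not help either: it multiplies the instance count by $2^{(\eps/3)n}$ but leaves $\Theta(C(k,\eps)n)$ clauses in each residual, so the clause-verification budget is still exceeded by an unbounded constant factor; it only frees about $(\eps/3)n$ bits of headroom. (As a secondary issue, even the variable-selection track is not free: the naive ``choose exactly one of $x_i$-true/$x_i$-false'' encoding needs carry protection and, done crudely, already costs about $2n$ bits, so getting the dominant part of the target down to $2^{(1+\eps)n}$ requires care you have not supplied.)

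For context: the paper you are reading does not prove this theorem at all; it imports it from~\cite{AbboudBHS17}, and only your first step (the Sparsification Lemma) matches how that proof begins. The reduction in~\cite{AbboudBHS17} does not assign an independent digit (or sub-digit) to each clause of the sparsified formula; avoiding precisely that per-clause cost is its main technical contribution. Instead it uses a structurally different encoding built from structured integer sets --- average-free sets in the spirit of Behrend's construction, the same tool this paper reuses in the verification block of Theorem~\ref{thm:MainReduction} --- together with additional enumeration that is absorbed into the $2^{\eps n}$ instance count, so that all selection/consistency bookkeeping fits into the $\eps n$ extra bits of the target. Without an idea of this kind, replacing your ``composite digits with offset fillers'' paragraph, the proposal does not constitute a proof.
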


\begin{theorem}
\label{thm:MainReduction}%
Given Subset Sum instances $(X_1,t_1), \ldots, (X_N,t_N)$, denoting $n = \max_i |X_i|$ and $t = \max_i t_i$, we can construct in time $(nN \log t)^{O(1)}$ a single Subset Sum instance $(X_0,t_0)$, with $|X_0| = O(nN)$ and $t_0 = t \cdot (nN)^{O(1)}$, such that $(X_0,t_0)$ is a yes-instance if and only if $(X_i,t_i)$ is a yes-instance for some $i \in \{1,\ldots,N\}$.
%
\end{theorem}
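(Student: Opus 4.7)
The plan is to reduce OR Subset Sum to a single Subset Sum instance via a selector-and-tag construction with polynomial blowup. First I would normalize all targets to a common value: set $T^* := nt+1$ and append to each $X_i$ a pad element of value $T^* - t_i$, producing a new instance $(X_i', T^*)$. Since every element of $X_i$ is at most $t_i \leq t$, the entire $X_i$ sums to at most $nt < T^*$, so any subset of $X_i'$ reaching $T^*$ must include the pad; hence $(X_i', T^*)$ is a yes-instance iff $(X_i, t_i)$ is. I would then pad each $X_i'$ with enough zero-valued elements to make $|X_i''| = 2(n+1)$, so that any yes-subset of $X_i'$ (which has size at most $n+1$) can be extended within $X_i''$ to a subset of size exactly $n^* := n+1$ by including zeros.

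The heart of the construction introduces $N$ selector elements $u_1, \ldots, u_N$ and attaches to every element a main value plus a triple of tag values. Every element of $X_i''$ carries the tag triple $(1, i, i^2)$, while $u_i$ carries $(D_1 - n^*, D_2 - i n^*, D_3 - i^2 n^*)$ with $D_1, D_2, D_3$ chosen polynomial in $nN$ and large enough to keep all tag values non-negative; the target tag triple is $(D_1, D_2, D_3)$ and the main-slot target is $T^*$ (selectors contribute $0$ to the main slot). Choosing $D_1$ strictly larger than $\sum_i |X_i''|$ forces exactly one selector $u_{i_0}$ to be chosen: two overshoot $D_1$ in the first tag, and none undershoots. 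Writing $k_j := |S \cap X_j''|$, the three tag equations reduce to $\sum_j k_j = n^*$, $\sum_j j k_j = i_0 n^*$, and $\sum_j j^2 k_j = i_0^2 n^*$; viewing $k_j/n^*$ as a distribution on instance indices, this distribution has mean $i_0$ and variance $0$, so by the equality case of Cauchy--Schwarz it is a single atom at $i_0$, giving $k_{i_0} = n^*$ and $k_j = 0$ for $j \neq i_0$. The main-slot equation then says the $n^*$ picks from $X_{i_0}''$ sum to $T^*$, which by the normalization argument is equivalent to $(X_{i_0}, t_{i_0})$ being yes; the converse direction follows by simply reversing the construction.

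To merge the four ``slots'' (main plus three tags) into a single Subset Sum instance, I would use scales $M_0 = 1, M_1, M_2, M_3$ with each $M_k$ strictly exceeding $M_{k-1}$ times the maximum possible sum in slot $k-1$, preventing any carries between slots so that the four constraints remain independent. A routine calculation shows that each $M_k$ is of size $t \cdot (nN)^{O(1)}$, and therefore the final target is $t_0 = t \cdot (nN)^{O(1)}$, over $|X_0| = 2(n+1)N + N = O(nN)$ elements, exactly matching the theorem.

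The main obstacle is keeping the target polynomial rather than exponential in $N$: a naive ``one indicator bit per instance'' tag scheme yields targets of size $\Omega(2^N)$, and a ``one base-$N$ digit per instance'' scheme yields $\Omega(N^N)$. The Cauchy--Schwarz/variance trick sidesteps this entirely, since three moment equations with polynomial-sized tag values already pin any non-negative integer distribution to a single atom, exactly what is needed to identify the selected instance while keeping all tags polynomial. The remaining bookkeeping --- choosing the $D_k$ and $M_k$ so that all element values are non-negative and so that no spurious solutions exist when zero or two selectors are chosen --- is routine.
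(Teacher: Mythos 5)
Your proposal is correct, and it follows the paper's overall architecture (normalize all targets to a common value and a fixed solution size, add one ``selector'' integer per instance, build the output integers out of independent blocks separated by zero-padding to prevent carries, and force the $n^*$ non-selector picks to come from the single selected instance), but it replaces the paper's key technical ingredient with a genuinely different and more elementary one. Where the paper attaches to each element of $X_i$ a single verification value $s_i$ drawn from an $n$-average-free (Behrend) set, and uses the defining property of average-freeness to argue that $s_{i_1}+\cdots+s_{i_n}=n\,s_{i^*}$ forces all $s_{i_\ell}$ equal, you instead attach the three tags $(1,i,i^2)$ and read off the three moment equations $\sum_j k_j = n^*$, $\sum_j j\,k_j = i_0 n^*$, $\sum_j j^2 k_j = i_0^2 n^*$; the variance-zero (equality in Cauchy--Schwarz) argument then pins the empirical distribution $k_j/n^*$ to a single atom at $i_0$. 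Both routes give tag values polynomial in $nN$ and hence $t_0 = t\cdot(nN)^{O(1)}$, and both give $|X_0|=O(nN)$, so the quantitative conclusion is identical. The trade-off is clear: the paper's construction uses one verification block but must invoke Behrend's construction (Lemma~\ref{lem:average-free}), whereas your construction spends two extra tag slots but relies only on a first-principles convexity fact, avoiding average-free sets altogether; this is arguably a simplification, and it makes the polynomial bound on the target completely transparent. One minor bookkeeping remark: your tag slot 1 simultaneously plays the role of the paper's separate ``counting block'' (enforcing $m=n$) and its ``last bit'' (enforcing exactly one selector), which you correctly observe by picking $D_1 > \sum_i |X_i''| \ge 2n^*$ so that zero selectors undershoot and two or more overshoot.
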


Using the two results above, the proof of Theorem~\ref{thm:ANDSubsetSum} follows by combining both constructions given by the theorems:

\begin{proof}[of Theorem~\ref{thm:ANDSubsetSum}]
Let $\phi$ be a $k$-SAT formula on $n$ variables and let $0 < \alpha < 1$.
We write $n_1 = \lfloor \alpha \cdot n \rfloor$ and $n_2 = n-n_1$, so that $n_1 \le \alpha n$ and $n_2 \le (1-\alpha)n+1$.
Our goal is to determine whether $\forall x_1,\ldots,x_{n_1} \exists x_{n_1+1},\ldots,x_n\colon \phi(x_1,\ldots,x_n)$ is true. 

We enumerate all assignments $\partial$ of the variables $x_1,\ldots,x_{n_1}$, and let $\phi_\partial$ be the resulting $k$-SAT formula on $n_2$ variables after applying $\partial$. Note that there are $2^{n_1}$ formulas $\phi_\partial$.

For each formula $\phi_\partial$, we run the reduction from Theorem~\ref{thm:redsatsubsetsum} with parameter~$\eps_0$, resulting in a set $\mathcal{I}_\partial$ of at most $2^{\eps_0 n_2}$ Subset Sum instances such that $\phi_\partial$ is satisfiable if and only if at least one of the instances in $\mathcal{I}_\partial$ is a yes-instance.
Note that each Subset Sum instance in $\mathcal{I}_\partial$ consists of $O(n_2)=O(n)$ integers and has target at most $t=2^{(1+\eps_0) n_2}$. Moreover, running this reduction for all formulas $\phi_\partial$ takes time $2^{n_1 + \eps_0 n_2} n^{O(1)}$.

Next, using Theorem~\ref{thm:MainReduction}, we reduce $\mathcal{I}_\partial$ to a single Subset Sum instance $(X_\partial,t_\partial)$ such that $(X_\partial,t_\partial)$ is yes-instance if and only if $\phi_\partial$ is a yes-instance, and so $\phi$ is a yes-instance if and only if \emph{all} $(X_\partial,t_\partial)$ are yes-instances. 
Note that we have $|X_\partial|=O(n \cdot 2^{\eps_0 n_2})$ and $t_\partial =  O(2^{(1+\eps_0) n_2} \cdot (n \cdot 2^{\eps_0 n_2})^\gamma)$ for some constant $\gamma > 0$ that replaces a hidden constant in Theorem~\ref{thm:MainReduction}. 
Moreover, running this step for all formulas $\phi_\partial$ takes time $O( 2^{n_1} \cdot (n 2^{\eps_0 n_2})^\gamma )$, where again $\gamma > 0$ replaces a hidden constant in Theorem~\ref{thm:MainReduction}.

Finally, we assume that for some $\eps',\delta > 0$ we can solve AND Subset Sum on $N$ instances, each with $O(N^{\eps'})$ integers and target $O(N^\delta)$, in time $O(N^{1+\delta-\eps'})$. 
Set $\eps := \eps' / (1+\delta) $ and note that then we can in particular solve AND Subset Sum on $N$ instances, each with $O(N^\eps)$ integers and target $O(N^\delta)$, in time $O(N^{(1+\delta)(1-\eps)})$; for convenience we use this formulation in the following. 
Clearly, we can assume $\eps < 1$.
Set 
\[ N := 2^{(1+\eps)n_1} = \Theta(2^{(1+\eps)\alpha n}),
\]
and note that the number of instances $(X_\partial,t_\partial)$ is $2^{n_1}\leq N$.
In order to apply the assumed algorithm to the instances $(X_\partial,t_\partial)$, we need to verify that $|X_\partial| = O(N^\eps)$ and $t_\partial = O(N^\delta)$. 
To this end, we set $\alpha := 1 / (1+\delta)$ and $\eps_0 := \min\{ \eps \alpha, \eps/(1+2\gamma) \}$, and check that
\begin{align*}
  |X_\partial| &= O(n \cdot 2^{\eps_0 n_2}) = O(2^{\eps_0 n}) = O(N^\eps), \\
  t_\partial &= O(2^{(1+\eps_0) n_2} \cdot (n \cdot 2^{\eps_0 n_2})^\gamma) = O(2^{(1+\eps_0(1+2\gamma))n_2}). 
\end{align*}
Using $\eps_0 \le \eps/(1+2\gamma)$ and $n_2 \le (1-\alpha)n+1$, we further simplify this to $t = O(2^{(1+\eps)(1-\alpha)n})$. From our setting of $\alpha = 1 / (1+\delta)$ it now follows that $t = O(2^{(1+\eps) \delta \alpha n}) = O(N^\delta)$.
Hence, we showed that the assumed algorithm for AND Subset Sum is applicable to the at most $N$ instances $(X_\partial,t_\partial)$. This algorithm runs in time
\[ O(N^{(1+\delta)(1-\eps)}) = O(2^{(1+\delta)(1-\eps)(1+\eps)\alpha n}) = O(2^{(1-\eps^2)n}). \]
Additionally, as analyzed above, the running time incurred by the reduction is bounded by
\begin{align*}
O\big(2^{n_1 + \eps_0 n_2} n^{O(1)} + 2^{n_1} \cdot (n 2^{\eps_0 n_2})^\gamma \big) = O(2^{n_1 + \eps_0(1+2\gamma)n_2}) &= O(2^{\alpha n + \eps (1-\alpha)n})  \\
&= O(2^{(1 - (1-\eps)(1-\alpha))n}).
\end{align*}
Hence, we can solve the quantified $k$-CNF formula $\phi$ in time $O(2^{(1 - \min\{\eps^2, (1-\eps)(1-\alpha)\})n})$, which for sufficiently large $k$ violates $\forall \exists$-SETH. \qed
\end{proof}


Next, we infer Corollary~\ref{cor:thecor} from Theorem~\ref{thm:ANDSubsetSum}. 

\begin{proof}[of Corollary~\ref{cor:thecor}]
Fix any $\delta > 0$, and let $\eps > 0$ to be chosen later.
For given Subset Sum instances $(X_1,t_1),\ldots,(X_N,t_N)$, each with $O(N^\eps)$ integers and target $O(N^\delta)$, our goal is to determine whether all of these instances are yes-instances.

For each $i$, we construct a Partition instance $(X^*_i,t^*_i)$ by setting
\[ X^*_i := X_i \cup \Big\{ \Big( \sum_{x \in X_i} x \Big) + t_i, 2 \cdot \Big(\sum_{x \in X_i} x \Big) - t_i \Big\} \qquad\text{and} \qquad t^*_i := \frac 12 \sum_{x \in X^*_i} x. \]
It is easy to see that the Partition instance $(X^*_i,t^*_i)$ is equivalent to the Subset Sum instance $(X_i,t_i)$. Indeed, the two additional items cannot be put on the same side of the partition, as their sum is too large. Putting them on different sides of the partition, it remains to split $X_i$ into a subset $Y_i \subseteq X_i$ summing to $t_i$ and the remainder $X_i \setminus Y_i$ summing to $(\sum_{x \in X_i} x) - t_i$, to obtain a balanced partition. 

Observe that $|X^*_i| = O(|X_i|) = O(N^\eps)$ and $t^*_i = O(|X_i| \cdot t_i) = O(N^{\delta+\eps})$. 

Now assume that we can solve AND Partition on $N$ instances, each with at most $s$ integers and target at most $t$, in time $O(Ns + t(Ns)^{1-\eps_0})$ for some $\eps_0 > 0$.
On the instances $(X^*_1,t^*_1),\ldots,(X^*_n,t^*_N)$, this algorithm would run in time 
\[ \widetilde{O}(Ns + t(Ns)^{1-\eps_0}) = \widetilde{O}(N^{1+\eps} + N^{\delta+\eps+(1+\eps)(1-\eps_0)}) = \widetilde{O}(N^{1+\eps} + N^{1 + \delta + 2\eps - \eps_0}). \]
Finally, we pick $\eps := \min\{\delta/2, \eps_0/3\}$ to bound this running time by $O(N^{1+\delta-\eps})$. This violates Theorem~\ref{thm:ANDSubsetSum}. \qed
\end{proof}

\subsection{From OR Subset Sum to Subset Sum}
\label{subsec:OR2One}%

We next provide a proof of Theorem~\ref{thm:MainReduction}, the second tool used in our reduction from Quantified $k$-SAT to Subset Sum. We will use the notion of average-free sets.

\begin{definition}[$m$-average-free set]
A set of integers $S$ is called $m$-average-free if for all (not necessarily distinct) integers $s_1,\ldots,s_{m+1} \in S$ we have:
\[
s_1 + \cdots +s_m = m \cdot s_{m+1} 
\,\,\text{ implies that }\,\,
s_1 = \cdots = s_{m+1}.
\]
\end{definition}

\begin{lemma}[\cite{Behrend47}]
\label{lem:average-free}%
Given $m \geq 2$, $M \geq 1$, and  $0 < \eps < 1$, an $m$-average-free set $S$ of size~$M$ with $S \subseteq [0, m^{O(1/\eps)} M^{1+\eps}]$ can be constructed in $M^{O(1)}$ time.
\end{lemma}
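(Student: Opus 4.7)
The approach is the standard Behrend-style construction, generalized from $3$-term progressions to $m$-term averages. Fix a dimension $d$ and a digit bound $b$, and consider the lattice box $V := \{0,1,\ldots,b-1\}^d$. Each $v \in V$ will be encoded as the integer $\iota(v) := \sum_{i=0}^{d-1} v_i B^i$, where the base $B := m(b-1)+1$ is chosen so that every coordinate of an $m$-fold sum of elements of $V$ is at most $m(b-1) < B$. This rules out carries, and makes the integer identity $\iota(v_1)+\cdots+\iota(v_m) = m\,\iota(v_{m+1})$ equivalent to the coordinate-wise identity $v_1+\cdots+v_m = m\,v_{m+1}$ in~$V$.

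The geometric core is to restrict $V$ to a single Euclidean sphere $V_R := \{v \in V : \|v\|_2^2 = R\}$. The elementary identity
\[ m \sum_{j=1}^m \|v_j\|_2^2 \;-\; \Big\| \sum_{j=1}^m v_j \Big\|_2^2 \;=\; \sum_{1 \le i < j \le m} \|v_i - v_j\|_2^2 \;\ge\; 0 \]
shows that if $v_{m+1} = \tfrac{1}{m}\sum_j v_j$ then $\|v_{m+1}\|_2^2 \le \tfrac{1}{m}\sum_j \|v_j\|_2^2$, with equality only when $v_1 = \cdots = v_m$. Hence if all of $v_1,\ldots,v_{m+1}$ lie in $V_R$ they must coincide, so $\iota(V_R)$ is $m$-average-free.

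All that remains is parameter balancing. Since $\|v\|_2^2 \in \{0,1,\ldots,d(b-1)^2\}$ for $v \in V$, pigeonhole furnishes some radius $R$ with $|V_R| \ge b^d / (d(b-1)^2 + 1)$. I would take $d := 2 + \lceil 2/\eps \rceil = O(1/\eps)$ and then the smallest $b$ with $b^{d-2} \ge (d+1)M$, so that $|V_R| \ge M$; this gives $b = O(M^{1/(d-2)})$. Each integer in $\iota(V_R)$ lies in $[0, B^d)$ with $B \le mb$, and the direct estimate
\[ B^d \;\le\; m^d\, b^d \;=\; m^d \cdot b^{d-2} \cdot b^2 \;\le\; m^{O(1/\eps)} \cdot O(M) \cdot M^{2/(d-2)} \;\le\; m^{O(1/\eps)} M^{1+\eps} \]
matches the claimed bound. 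Algorithmically, one enumerates the $b^d = M^{O(1)}$ vectors, buckets them by $\|v\|_2^2$, picks the heaviest bucket, and applies $\iota$; the whole procedure runs in $M^{O(1)}$ time.

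The only subtlety worth flagging is the choice of the base $B$: it must be large enough that an $m$-fold coordinate-wise sum cannot overflow into the next digit, otherwise the integer identity would no longer imply the vector identity and the convexity argument would fail to transfer. Once $B = m(b-1)+1$ is locked in, everything else is bookkeeping that mirrors Behrend's original proof with $3$ replaced by $m$.
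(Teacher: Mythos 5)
Your construction is correct: the base-$B$ encoding with $B=m(b-1)+1$ rules out carries, the sphere restriction with the identity $m\sum_j\|v_j\|^2-\|\sum_j v_j\|^2=\sum_{i<j}\|v_i-v_j\|^2$ forces $v_1=\cdots=v_{m+1}$, and the parameter choice $d=O(1/\eps)$, $b=O(M^{1/(d-2)})$ gives the claimed range $m^{O(1/\eps)}M^{1+\eps}$ and $M^{O(1)}$ construction time (the $\eps$-dependent constants are absorbed since $m\ge 2$). The paper does not prove this lemma but cites Behrend's construction, and your argument is exactly that standard construction generalized from $3$-term progressions to $m$-fold averages, so it matches the intended proof.
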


\begin{proof}[of Theorem~\ref{thm:MainReduction}]
Let $(X_1,t_1),\ldots,(X_N,t_N)$ be $N$ given Subset Sum instances, and write $t = \max_i t_i$ and $n = \max_i |X_i|$. We begin by slightly modifying these instances. 
First, let $t^* = (n+1) t$, and add to each $X_i$ the integer $t^*-t_i$. Clearly, there is a subset of $X_i$ which sums up to $t_i$ if and only if there is a subset of $X_i \cup \{t^*-t_i\}$ that sums up to $t^*$. 
Next, we add at most $2(n+1)$ copies of 0 to each instance, ensuring that all instances have the same number of integers $2(n+1)$, and that any instance which has a solution also has one which includes exactly $n+1$ integers. 
Note that these modifications only change $n$ by a constant factor, and $t$ by a factor $O(n)$, which are negligible for the theorem statement.

Therefore, with slight abuse of notation, henceforth we assume that we are given $N$ Subset Sum instances $(X_1,t_1),\ldots,(X_N,t_N)$ with $t_1=\ldots=t_N=t$ and $|X_1|=\ldots=|X_N|=2n$. Moreover, for any $i$ if there exists a subset $Y_i \subseteq X_i$ that sums up to $t$ then we can assume without loss of generality that $|Y_i| = n$.

We construct an $n$-average-free set $S=\{s_1,\ldots,s_N\}$, with $S \subseteq [0,N^2 \cdot n^{O(1)}]$, using Lemma~\ref{lem:average-free}. Let $S_{\textup{max}}=\max_{s \in S} s$. 

We are now ready to describe our construction of $(X_0,t_0)$. It will be convenient to view the integers in $(X_0,t_0)$ as binary encoded numbers, or binary strings, and to describe how they are constructed in terms of \emph{blocks} of consecutive bits. Each integer will consist of seven blocks of fixed sizes. Starting with the least significant bit, the first block has $\lceil \lg t \rceil$ bits and is referred to as the \emph{encoding block}, the third block has $\lceil \lg n \rceil$ bits and is referred to as the \emph{counting block}, the fifth block has $\lceil \log(n \cdot S_{\textup{max}}) \rceil = O(\log (nN))$ bits and is referred to as the \emph{verification block}, and the last block consists of a single bit. In between these blocks are blocks containing $\lceil \log(2nN) \rceil$ bits of value 0, whose sole purpose is to avoid overflows. 

For each integer $x_{i,j} \in X_i$, we construct a corresponding integer $x^0_{i,j} \in X_0$ as follows (here the `$|$'-characters are used only to differentiate between blocks, and have no other meaning):
\[
x^0_{i,j} \,\,\, = \,\,\,  0 \,\,\,|\,\,\, 
0 \cdots 0 \,\,\,|\,\,\, s_i  \,\,\,|\,\,\, 
0 \cdots 0 \,\,\,|\,\,\, 0 \cdots 01 \,\,\,|\,\,\, 
0 \cdots 0 \,\,\,|\,\,\, x_{i,j},
\]
Additionally, for each $i \in \{1,\ldots,N\}$ we construct an integer $x^0_i \in X_0$ associated with the instance $(X_i,t_i) $ as
\[
x^0_{i} \,\,\, = \,\,\,  1  \,\,\,|\,\,\, 
0 \cdots 0 \,\,\,|\,\,\, n \cdot (S_{\textup{max}}- s_i)  \,\,\,|\,\,\, 
0 \cdots 0 \,\,\,|\,\,\, 0 \cdots 0 \,\,\,|\,\,\,
0 \cdots 0 \,\,\,|\,\,\,  0.
\]
The two sets of integers described above constitute $X_0$. To complete the construction of the output instance, we construct the target integer $t_0$ as
\[
t_0 \,\,\, = \,\,\,  1 \,\,\,|\,\,\, 
0 \cdots 0 \,\,\,|\,\,\, n \cdot S_{\textup{max}}  \,\,\,|\,\,\, 
0 \cdots 0 \,\,\,|\,\,\, n \,\,\,|\,\,\, 
0 \cdots 0 \,\,\,|\,\,\,  t^*.
\]

Note that $|X_0|=O(\sum_i |X_i|) = O(nN)$ and $t_0 = t \cdot (nN)^{O(1)}$, as required by the theorem statement. Furthermore, the time required to construct $(X_0,t_0)$ is $(nN \log t)^{O(1)}$. 

We next argue that $(X_0,t_0)$ is a yes-instance if and only if $(X_i,t_i)$ is a yes-instance for some $i \in \{1,\ldots,N\}$. Suppose that there exists some $i \in \{1,\ldots,N\}$ and some $Y_i \subseteq X_i$ for which $\sum_{x_{i,j} \in Y_i} x_{i,j} = t$. By the discussion at the beginning of this proof, we can assume that $|Y_i| = n$. It is not difficult to verify that all integers in $Y_0 := \{x^0_{i,j} : x_{i,j} \in Y_i \} \cup \{x^0_i\}$ sum up to $t_0$. Indeed, by construction, the bits in the encoding block of these integers sum up to $\sum_{x_{i,j} \in Y_i} x_{i,j} = t$, the bits in the counting block sum up to $n$, the bits in the verification block sum up to $n \cdot S_{\textup{max}}$, and the last bit sums up to $1$. 

Conversely, assume that there is some subset $Y_0 \subseteq X_0$ with $\Sigma(Y_0)=\sum_{x \in Y_0} x = t_0$. Let $y_1,\ldots,y_{m} \in Y_0$ denote all integers of the form $x^0_{i,j}$ in $Y_0$, and let $x_{i_1,j_1},\ldots,x_{i_{m},j_{m}}\in X_1\cup \cdots \cup X_M$ denote the integers that appear in the encoding blocks of $y_1,\ldots,y_{m}$. Observe that as $m \leq 2nM$, by our construction the highest bit in each overflow block of $\Sigma(Y_0)$ must be 0. It follows that we can argue in each of the encoding block, counting block, verification block, and last block separately. This yields:
\begin{itemize}
\item $\sum_\ell x_{i_\ell,j_\ell} = t$, since if this sum is greater than $t$ then the second block of $\Sigma(Y_0)$ would not be all zeros, and if $\sum_\ell x_{i_\ell,j_\ell} < t$ then the encoding block of $\Sigma(Y_0)$ would not be~$t$.
\item $m = n$, by a similar argument in the counting block.
\item There is exactly one integer of the form $x^0_{i^*}$ in $Y_0$, for some $i^* \in \{1,\ldots,N\}$, as otherwise the most significant bit of $\Sigma(Y_0)$ would not be 1. 
\item $i^*=i_1 = \cdots = i_n$: Note that $x^0_{i^*}$ contributes $n \cdot (S_{\textup{max}}-s_{i^*})$ to the verification block of $\Sigma(Y_0)$, and so the remaining $n$ integers in $Y_0$ need to contribute together exactly $n \cdot s_{i^*}$ to this block, since the value of this block is $n \cdot S_{\textup{max}}$ in $t_0$. Since $S$ is an $n$-average-free set, the only way for this to occur is if all of these integers have $s_{i^*}$ encoded in their verification blocks, implying that $i^*=i_1 = \cdots = i_n$.
\end{itemize}
Let $i = i^*$ be the index in the last point above. Then $\sum_\ell x_{i,j_\ell} = t$ by the first point above, and so the subset  $\{x_{i,j_1},\ldots,x_{i,j_n}\}$ is a solution for the instance $(X_i,t_i)$. \qed
\end{proof}


\section{Scheduling Lower Bounds}
\label{sec:SLB}%

We next show how to apply Corollary~\ref{cor:thecor} to obtain $\Omega(n+P_{max} \cdot n^{1-\eps})$ lower bounds for several scheduling problems. In particular, we provide a complete proof of Theorem~\ref{thm:lowerbounds} in a sequence of lemmas below, each exhibiting a reduction from AND Subset Sum (or rather AND Partition) to the scheduling problem at hand.

In each reduction, we start with $N$ Partition instances $(X_1,t_1),\ldots,(X_N,t_N)$; these are Subset Sum instances with $t_i = \frac 12 \sum_{x \in X_i} x$. 
We write $s = \max_i |X_i|$ and $t = \max_i t_i$. 
We present reductions that transform these given instances into an instance $\mathcal{I}$ of a certain scheduling problem, such that $\mathcal{I}$ is a yes-instance if and only if $(X_i,t_i)$ is a yes-instance for all~$i$. 
The constructed instance $\mathcal{I}$ will consist of $n = O(Ns)$ jobs with maximum processing time $P_{max} = O(t)$. 
Since Corollary~\ref{cor:thecor} rules out time $\widetilde{O}(Ns + t (Ns)^{1-\eps})$ for AND Partition, it follows that the scheduling problem is not in time $\widetilde{O}(n + P_{max} \cdot n^{1-\eps})$, for any $\eps > 0$ assuming $\forall \exists$-SETH.

%

For an instance $(X_i,t_i)$, we let $x_{i,j}$ denote the $j$-th integer in~$X_i$. 

\subsection{Scheduling Notation and Terminology}

In all scheduling problems considered in this paper, we are given a set of jobs $J_1,\ldots,J_n$ to be scheduled non-preemptively on one or two identical parallel machines. Each job $J_j$ has a \emph{processing time} $p_j$, and according to the specific problem at hand, it may also have a \emph{due date} $d_j$, a \emph{release date} $r_j$, and a \emph{weight} $w_j$. We always use the same subscript for the job and its parameters. A \emph{schedule} consists of assigning each job $J_j$ a machine $M(J_j)$ and a \emph{starting time} $S_j \in \mathbb{N}^{\geq 0}$. The \emph{completion time} of job $j$ in a given schedule is $C_j = S_j + p_j$, and the \emph{makespan} of the schedule is its maximum completion time $C_{max} = \max_j C_j$. A schedule is \emph{feasible} if no two distinct jobs \emph{overlap} on the same machine; that is, for any pair of distinct jobs $J_j$ and $J_k$ with $M(J_j)=M(J_k)$ and $S_j \leq S_k$ we have $S_k \notin [S_j,C_j)$. Furthermore, when release dates are present, we require that $S_j \geq r_j$ for each job $J_j$. 

A job $J_j$ is said to be tardy in a given schedule if $C_j > d_j$, and otherwise it is said to be early. For each job $J_j$, we let $U_j \in \{0,1\}$ denote a Boolean variable with $U_j=1$ if $J_j$ is tardy and otherwise $U_j=0$. In this way, $\sum U_j$ denotes the number of tardy jobs in a given schedule, and $\sum w_jU_j$ denote their total weight. We let $T_j$ denote the \emph{tardiness} of a job $J_j$ defined by $T_j = \max\{0,C_j-d_j\}$, and we let $T_{max}=\max_j T_j$ denote the \emph{maximum tardiness} of the schedule. Below we use the standard three field notation $\alpha|\beta|\gamma$ introduced by Graham et al.~\cite{GrahamLLA79} to denote the various problems, where $\alpha$ denotes the machine model, $\beta$ denotes the constrains on the problem, and $\gamma$ is the objective function. Readers unfamiliar with the area of scheduling are also referred to~\cite{Pinedo2008} for additional background. 

\subsection{Problems on Two Machines}

We begin by considering scheduling problems on two parallel identical machines, as here our reductions are simpler to describe. Recall that in this setting, a schedule consists of assigning a starting-time $S_j$ and a machine $M(J_j)$ to each input job $J_j$. 

\subsubsection{\boldmath$P_2|\text{\emph{level-order}}|C_{max}$}

Perhaps the easiest application of Theorem~\ref{thm:ANDSubsetSum} is makespan minimization on two parallel machines when level-order precedence constraints are present~\cite{DolevWarmuth85,WangBellenguez19}. In this problem, jobs only have processing-times, and they are partitioned into classes $\mathcal{J}_1,\ldots,\mathcal{J}_k$ such that all jobs in any class $\mathcal{J}_i$ must be scheduled after all jobs in $\mathcal{J}_{i-1}$ are completed. The goal is to find a feasible schedule with minimum makespan $C_{max} = \max_j C_j$.

\begin{lemma}
\label{lem:level-order}%
$P_2|\text{level-order}|C_{max}$ has no $\widetilde{O}(n+P_{max} \cdot n^{1-\eps})$-time algorithm, for any $\eps > 0$, unless $\forall \exists$-SETH is false.  
\end{lemma}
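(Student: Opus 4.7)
The plan is to reduce AND Partition to $P_2|\text{level-order}|C_{max}$ via the most natural construction: each Partition instance becomes one level of the level-order hierarchy, and the two-machine makespan measures whether every instance admits a balanced split. Concretely, given Partition instances $(X_1,t_1),\ldots,(X_N,t_N)$ with $t_i = \tfrac{1}{2}\sum_{x \in X_i}x$, I would create classes $\mathcal{J}_1,\ldots,\mathcal{J}_N$, where $\mathcal{J}_i$ contains one job of processing time $x_{i,j}$ for each $x_{i,j}\in X_i$. The level-order precedence constraint forces all of $\mathcal{J}_i$ to finish before any job of $\mathcal{J}_{i+1}$ starts, so the schedule decomposes cleanly into $N$ independent two-machine load-balancing problems executed in sequence. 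The target makespan is set to $T := \sum_{i=1}^{N} t_i$.

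The correctness argument is a direct calculation. Because jobs of class $i$ cannot overlap with any other class, the contribution of class $i$ to $C_{max}$ equals the larger of the two machine loads on $\mathcal{J}_i$, which is at least $\lceil \tfrac{1}{2}\sum_{x\in X_i}x\rceil = t_i$, with equality precisely when $X_i$ can be split into two subsets each summing to $t_i$, i.e.\ when $(X_i,t_i)$ is a yes-instance of Partition. Summing over the $N$ classes yields $C_{max}\ge T$ always, and $C_{max}=T$ iff every Partition instance is a yes-instance. Hence the scheduling instance admits a schedule of makespan at most $T$ iff all Partition instances are yes-instances.

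For the parameters, I would verify $n = \sum_{i}|X_i| = O(Ns)$ and $P_{max}=\max_{i,j} x_{i,j} \le 2t = O(t)$ (using $x_{i,j}\le \sum_{x\in X_i}x = 2t_i$). Thus an algorithm running in time $\widetilde{O}(n+P_{max}\cdot n^{1-\eps})$ for $P_2|\text{level-order}|C_{max}$ would solve AND Partition in time $\widetilde{O}(Ns+t\cdot(Ns)^{1-\eps})$, contradicting Corollary~\ref{cor:thecor} under $\forall\exists$-SETH. The reduction itself is clearly polynomial in the input size, so it introduces no extra overhead that would spoil the lower bound.

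There is essentially no hard step here: the only thing one needs to double-check is the lower bound $C_{max}\ge\sum_i t_i$ with equality characterization, and the fact that a valid partition of $X_i$ directly yields an assignment of $\mathcal{J}_i$ to the two machines that completes exactly at time $\sum_{j<i}t_j + t_i$ on both machines simultaneously—so the next class can start without any forced idle time. This is immediate from the definition of level-order precedence and the Partition property, making this the simplest of the scheduling reductions and a good warm-up for the more delicate constructions in later lemmas.
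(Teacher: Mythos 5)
Your proof is correct and follows essentially the same route as the paper: one level-order class per Partition instance, one job per integer, target makespan $\sum_i t_i$, with the observation that the classes decouple into a sequence of independent two-machine load-balancing problems whose minimum makespans sum. The only cosmetic difference is that you bound $P_{max}\le 2t$ rather than $\le t$, which is equally sufficient for the $O(t)$ claim needed to invoke Corollary~\ref{cor:thecor}.
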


\begin{proof}
First recall that a single Partition instance $(X,t)$ easily reduces to an instance of $P_2||C_{max}$ (\emph{i.e.} without precedence constraints on the jobs) by creating a job with processing time $x$ for each $x \in X$, and then setting the required makespan $C$ to be $C=t$. For reducing multiple Partition instances we can use the precedence constraints: For each instance $(X_i,t_i)$ of Partition, we create a class of jobs $\mathcal{J}_i$ which includes a job $J_{i,j}$ for each $x_{i,j} \in X_i$ with processing time $p_{i,j}=x_{i,j}$. Then since all jobs in class $\mathcal{J}_i$ must be processed after all jobs in $\mathcal{J}_1,\ldots,\mathcal{J}_{i-1}$ are completed, it is easy to see that the $P_2|\text{\emph{level-order}}|C_{max}$ instance has a feasible schedule with makespan at most $C= \sum_i t_i$ if and only if each Partition instance is a yes-instance. 

Indeed, if each $X_i$ has a subset $Y_i \subset X_i$ which sums up to $t_i = \frac{1}{2} \cdot \sum_j x_{i,j}$, then we can schedule all jobs $J_{i,j}$ associated with elements $x_{i,j} \in Y_i$ on the first machine (following all jobs associated with elements in $Y_1,\ldots,Y_{i-1}$), and all jobs $J_{i,j}$ associated with elements $x_{i,j} \notin Y_i$ on the second machine. This gives a feasible schedule with makespan at most $C$. Conversely, a schedule with makespan at most $C$ must have the last job in $\mathcal{J}_i$ complete no later than $\sum_{i_0 \leq i} t_{i_0}$, for each $i\in\{1,\ldots,N\}$. This in turn can only be done if each $X_i$ can be partitioned into two sets that sum up to $t_i$, which implies that each $(X_i,t_i)$ is a yes-instance. 

\smallskip
Starting from $N$ Partition instances $(X_1,t_1),\ldots,(X_N,t_N)$, each with at most $s$ integers and target at most $t$, our reduction constructs $n \le Ns$ jobs with maximum processing time $P_{max} \le t$. Therefore, any $\widetilde{O}(n+P_{max} \cdot n^{1-\eps})$-time algorithm for $P_2|\text{\emph{level-order}}|C_{max}$ would yield an $\widetilde{O}(Ns + t (Ns)^{1-\eps})$-time algorithm for AND Partition, which contradicts Corollary~\ref{cor:thecor}, assuming $\forall \exists$-SETH. \qed
\end{proof}

\subsubsection{\boldmath$P_2||T_{max}$ and \boldmath$P_2||\sum U_j$}

We next consider the $P_2||T_{max}$ and $P_2||\sum U_j$ problems, where jobs also have due dates, and the goal is to minimize the maximum tardiness and the total number of tardy jobs, respectively. The reduction here is very similar to the previous reduction. We create for each $i \in \{1,\ldots,N\}$, and each $x_{i,j}\in X_i$, a job $J_{i,j}$ with processing time $p_{i,j}=x_{i,j}$ and due date 
\[
d_{i,j}=d_i= \sum^i_{\ell=1} t_\ell = \frac{1}{2} \cdot \sum^i_{\ell=0} \sum_j x_{\ell,j}.
\]
Observe that for each $i \in \{1,\ldots,N\}$, all jobs $J_{i,j}$ can be scheduled early if and only if $X_i$ can be partitioned into two sets summing up to $t_i$. Thus, all jobs can be scheduled early if and only if all Partition instances are yes-instances. Note that this corresponds to both objective functions $T_{max}$ and $\sum U_j$ at value 0. Thus, using Corollary~\ref{cor:thecor} we obtain:
\begin{lemma}
\label{lem:TmaxSumUj}%
Both $P_2||T_{max}$ and $P_2||\sum U_j$ have no $\widetilde{O}(n+P_{max} \cdot n^{1-\eps})$-time algorithms, for any $\eps > 0$, assuming $\forall \exists$-SETH.  
\end{lemma}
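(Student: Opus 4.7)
The plan is to formalize the reduction sketched just above the lemma statement and verify correctness in both directions for each of the two objectives. Given $N$ Partition instances $(X_1,t_1),\ldots,(X_N,t_N)$ with $s = \max_i |X_i|$ and $t = \max_i t_i$, I would build a scheduling instance with one job $J_{i,j}$ per element $x_{i,j} \in X_i$, setting $p_{i,j} = x_{i,j}$ and $d_{i,j} = d_i := \sum_{\ell=1}^i t_\ell$. The reduction produces $n \le Ns$ jobs with $P_{max} \le t$ in polynomial time. The key claim to establish is that the optimum of $T_{max}$ (respectively $\sum U_j$) equals $0$ if and only if every Partition instance is a yes-instance, since both objectives vanish exactly when every job is early.

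For the forward direction, suppose each $(X_i,t_i)$ has a solution, i.e.\ a subset $Y_i \subseteq X_i$ with $\sum_{x \in Y_i} x = t_i$. I would schedule, on machine~$1$ during the interval $[d_{i-1}, d_i]$, the class-$i$ jobs $J_{i,j}$ corresponding to $x_{i,j} \in Y_i$ packed contiguously (which exactly fills the interval, since these jobs have total processing time $t_i = d_i - d_{i-1}$), and on machine~$2$ during the same interval the jobs corresponding to $X_i \setminus Y_i$. All class-$i$ jobs complete by~$d_i$, so $T_{max} = 0$ and $\sum U_j = 0$.

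The reverse direction is the more delicate part, since there are no precedence constraints and jobs of different classes could in principle be interleaved arbitrarily. Let $a_i$ (resp.\ $b_i$) denote the total processing time that machine~$1$ (resp.\ machine~$2$) spends on class-$i$ jobs, and set $A_i = a_1 + \cdots + a_i$ and $B_i = b_1 + \cdots + b_i$. If every job is early, then by time $d_i$ each machine has finished every class-$\le i$ job assigned to it, and since each machine has only $d_i$ units of time available in $[0,d_i]$, this yields $A_i \le d_i$ and $B_i \le d_i$. On the other hand, the total processing time of all class-$\le i$ jobs is $A_i + B_i = \sum_{\ell=1}^{i} \sum_j x_{\ell,j} = 2 d_i$, which forces $A_i = B_i = d_i$; telescoping gives $a_i = b_i = t_i$. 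Consequently, the elements $x_{i,j}$ whose jobs are processed on machine~$1$ form a subset of $X_i$ summing exactly to $t_i$, witnessing that $(X_i,t_i)$ is a yes-instance of Partition.

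Combining the two directions, a hypothetical $\widetilde{O}(n + P_{max} \cdot n^{1-\eps})$ algorithm for either $P_2||T_{max}$ or $P_2||\sum U_j$ would decide AND Partition in time $\widetilde{O}(Ns + t \cdot (Ns)^{1-\eps})$, contradicting Corollary~\ref{cor:thecor} under $\forall\exists$-SETH. I expect the only real (but mild) obstacle to be the reverse direction: the freedom to interleave classes on the two machines must be ruled out, which the identity $A_i + B_i = 2 d_i$ together with $A_i, B_i \le d_i$ handles cleanly by forcing both machines to be fully occupied with class-$\le i$ work throughout $[0,d_i]$.
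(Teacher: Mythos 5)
Your proposal uses exactly the paper's reduction (jobs $J_{i,j}$ with $p_{i,j}=x_{i,j}$ and $d_{i,j}=d_i=\sum_{\ell\le i}t_\ell$) and reaches the same conclusion; the paper states the correctness of the reverse direction as an ``observe that'' while you supply the missing telescoping argument ($A_i\le d_i$, $B_i\le d_i$, $A_i+B_i=2d_i \Rightarrow A_i=B_i=d_i \Rightarrow a_i=b_i=t_i$), which is the right way to rule out cross-class interleaving. The proof is correct and is a more carefully spelled-out version of the same approach.
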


\subsubsection{\boldmath$P_2|r_j \geq 0|C_{max}$}

Our final dual machine example is the problem of minimizing makespan when release dates are present, the classical $P_2|r_j \geq 0|C_{max}$ problem.
\begin{lemma}
$P_2|r_j \geq 0|C_{max}$ has no $\widetilde{O}(n+P_{max} \cdot n^{1-\eps})$-time algorithm, for any $\eps > 0$, unless $\forall \exists$-SETH is false.  
\end{lemma}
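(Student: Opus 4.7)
My plan is to follow the same template as the previous two-machine reductions, but use release dates instead of precedence constraints or due dates to separate the $N$ Partition instances in time. Given $N$ Partition instances $(X_1,t_1),\ldots,(X_N,t_N)$, I create one job $J_{i,j}$ for every element $x_{i,j}\in X_i$, with processing time $p_{i,j}=x_{i,j}$ and release date $r_{i,j}=\sum_{\ell<i}t_\ell$. The target makespan is $C=\sum_{i=1}^N t_i$. Clearly $n\le Ns$ jobs are created, with $P_{\max}\le t$, so as usual the complexity bound from Corollary~\ref{cor:thecor} immediately transfers provided the reduction is correct.

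For the easy direction, if every $(X_i,t_i)$ is a yes-instance, fix a partition $X_i=Y_i\sqcup(X_i\setminus Y_i)$ with $\sum Y_i=\sum(X_i\setminus Y_i)=t_i$. Within the slot $[\,\sum_{\ell<i}t_\ell,\sum_{\ell\le i}t_\ell\,]$ of length $t_i$, schedule the jobs of $Y_i$ back to back on machine~$1$ and those of $X_i\setminus Y_i$ back to back on machine~$2$. All release dates are respected, both machines are always busy, and the overall makespan is exactly $C$.

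The interesting direction is the converse. The key observation is a machine-time budget argument: the total processing volume equals $\sum_i 2t_i=2C$, which equals the total available machine time on two machines over $[0,C]$. Hence in any schedule of makespan $C$ both machines must be fully busy throughout $[0,C]$. Combined with the release-date constraint $r_{i,j}=\sum_{\ell<i}t_\ell$, only jobs of instances $1,\ldots,i-1$ can occupy the window $[0,\sum_{\ell<i}t_\ell]$, which has total machine time $2\sum_{\ell<i}t_\ell$ --- exactly the total processing volume of those instances. Therefore jobs of instances $1,\ldots,i-1$ fill that prefix window exactly, and by a decreasing induction on $i$ the jobs of instance $i$ must be scheduled entirely inside the slot $[\sum_{\ell<i}t_\ell,\sum_{\ell\le i}t_\ell]$ of length $t_i$. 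With both machines fully busy on instance-$i$ jobs in a window of length $t_i$, each machine processes a subset of $X_i$ of total size exactly $t_i$, giving the required partition of $X_i$.

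The main (mild) obstacle is making sure the inductive fill-up argument is airtight: jobs are non-preemptive and may in principle cross the boundary between slots, so one has to argue cleanly that the tight volume accounting forces every boundary to align. This is handled by the downward induction described above --- instance~$N$ must fit in $[r_N,C]$ by its release date and the makespan, its total volume equals the available machine time there, so it exactly fills the slot, and then the same reasoning applies to instance $N-1$ on $[r_{N-1},r_N]$, and so on. Everything else (counting $n$, bounding $P_{\max}$, and invoking Corollary~\ref{cor:thecor}) is routine and identical to the previous lemmas.
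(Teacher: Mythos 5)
Your reduction (jobs with processing time $x_{i,j}$ and release date $\sum_{\ell<i}t_\ell$, target makespan $\sum_i t_i$) is exactly the one in the paper, which states the correctness equivalence without proof. Your tight-volume/no-idle-time accounting argument is a correct way to fill in the converse direction that the paper leaves implicit, so the proposal matches the paper's approach and is sound.
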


\begin{proof}
Let $(X_1,t_1),\ldots,(X_N,t_N)$ be $N$ instances of Partition. For each element $x_{i,j} \in X_i$ we create a job $J_{i,j}$ with processing time $p_{i,j}=x_{i,j}$ and release date $r_{i,j} = \sum_{\ell < i} t_\ell$. Note that there is a schedule for this instance with makespan $\sum_{i=1}^N t_i$, where each job is scheduled no earlier than its release date, if and only if each Partition instance is a yes-instance. Also note that the resulting instance has maximum processing time $P_{max} = \max_i t_i$ and total number of jobs $n \le N \cdot \max_i |X_i|$. As before, using Corollary~\ref{cor:thecor} we can now rule out time $\widetilde{O}(n+P_{max} \cdot n^{1-\eps})$, assuming $\forall \exists$-SETH. \qed
\end{proof}

\subsection{Problems on One Machine}

We next consider single machine problems. Obviously, a schedule in this case only needs to specify a starting time $S_j$ for each job $J_j$, and in case there are no release dates, a schedule can be simply thought of as a permutation of the jobs. 

\subsubsection{\boldmath$1||\sum w_jU_j$}

One of the most classical single-machine scheduling problems which already appeared in Karp's initial list of 21 NP-complete problems~\cite{Karp72} is the problem of minimizing the total weight of tardy jobs. Here each job $J_j$ has a due date $d_j$ and weight $w_j$, and the goal is to minimize $\sum w_jU_j$.  

\begin{lemma}
\label{lem:wjUj}%
Assuming $\forall \exists$-SETH, there is no $\widetilde{O}(n+P_{max} \cdot n^{1-\eps})$-time algorithm for $1||\sum w_jU_j$, for any $\eps > 0$.  
\end{lemma}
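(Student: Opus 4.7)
My plan is to follow the AND Partition reduction template of Lemma~\ref{lem:TmaxSumUj}, compensating for the absence of a second machine by a carefully chosen weight scheme. Given $N$ Partition instances $(X_1,t_1),\ldots,(X_N,t_N)$, I create for every element $x_{i,j}\in X_i$ a single job $J_{i,j}$ with processing time $p_{i,j}=x_{i,j}$, due date $d_{i,j}:=d_i=\sum_{\ell\le i}t_\ell$, and weight $w_{i,j}:=(N+1-i)\cdot x_{i,j}$. Setting the threshold $C:=\sum_{i=1}^N (N+1-i)\,t_i$, the goal is to show that $\min \sum w_jU_j=C$ if and only if every $(X_i,t_i)$ is a yes-instance of Partition.

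The easy direction mirrors the two-machine case: given halvings $X_i=Y_i\sqcup(X_i\setminus Y_i)$ with $\sum_{x\in Y_i}x=t_i$, scheduling the jobs corresponding to $Y_i$ in the slot $[d_{i-1},d_i]$ and the remaining jobs after time $d_N$ yields a feasible schedule whose tardy weight is exactly $\sum_i(N+1-i)t_i=C$.

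The main step is the converse. Let $E$ be an optimal early set and let $a_i$ denote the total processing time of early jobs from instance $i$; since all such jobs share deadline $d_i$, EDF-schedulability is equivalent to the prefix bounds $\sum_{\ell\le i}a_\ell\le d_i$, i.e.\ $B_i:=\sum_{\ell\le i}(a_\ell-t_\ell)\le 0$ for each $i$, and of course each $a_i$ must be a subset sum of $X_i$. The crucial calculation is a one-line Abel summation: with $q_i:=N+1-i$ the differences $q_i-q_{i+1}=1$ and $q_N=1$, so
\[
\sum_{i=1}^N q_i\,(a_i - t_i) \;=\; q_N B_N + \sum_{i=1}^{N-1}(q_i-q_{i+1})B_i \;=\; \sum_{i=1}^N B_i \;\le\; 0,
\]
with equality iff every $B_i=0$, iff $a_i=t_i$ for all~$i$. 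Hence the maximum achievable early weight is $C$, attained only if every $X_i$ contains a subset summing to $t_i$.

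To conclude, the construction yields $n\le Ns$ jobs with $P_{max}\le t$ and weights polynomially bounded by $Nt$, so an $\widetilde{O}(n+P_{max}\cdot n^{1-\eps})$-time algorithm for $1||\sum w_jU_j$ would translate to an $\widetilde{O}(Ns + t(Ns)^{1-\eps})$-time algorithm for AND Partition, violating Corollary~\ref{cor:thecor} under $\forall\exists$-SETH. The main obstacle I anticipate is getting the weight scheme right: a naive choice such as $w_{i,j}=x_{i,j}$ allows one to "borrow" early capacity from a later block to compensate for a failed partition in an earlier block (since only the cumulative bounds $\sum_{\ell\le i}a_\ell\le d_i$ are active), and it is precisely the strictly decreasing arithmetic sequence of weights that turns the Abel telescoping identity above into a rigidity statement forcing $a_i=t_i$ individually.
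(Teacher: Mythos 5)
Your construction is identical to the paper's (processing time $x_{i,j}$, due date $d_i=\sum_{\ell\le i}t_\ell$, weight $(N+1-i)\,x_{i,j}$, threshold $\sum_i(N+1-i)t_i$), and the forward direction matches word for word. The one place you genuinely improve on the paper is the converse: the paper asserts without justification that $w(\mathcal{E})\ge W$ forces $w(\mathcal{E}_i)\ge (N-i+1)t_i$ blockwise, whereas your Abel-summation identity $\sum_i q_i(a_i-t_i)=\sum_i B_i\le 0$ (with $B_i\le 0$ from EDF feasibility) proves this rigorously and pinpoints exactly why the strictly decreasing weight sequence makes the cumulative constraints rigid.
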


\begin{proof}
Let $(X_1,t_1),\ldots,(X_N,t_N)$ be $N$ instances of Partition. For each $i \in \{1,\ldots,N\}$, and for each $x_{i,j} \in X_i$, we create a job $J_{i,j}$ with the following parameters: 
\begin{itemize}
\item processing time $p_{i,j}=x_{i,j}$, 
\item weight $w_{i,j} = (N-i+1) \cdot x_{i,j}$,
\item and due date $d_{i,j} = d_i = \sum^i_{\ell=1}t_\ell$. 
\end{itemize}
We argue that there is a schedule for all jobs $J_{i,j}$ with total weight of tardy jobs at most $W=\sum^N_{i=1} (N-i+1) \cdot t_i$ if and only if each Partition instance $(X_i,t_i)$ is a yes-instance. 

Suppose that each $X_i$ has a subset $Y_i \subseteq X_i$ which sums up to $t_i$. Let $\mathcal{E}_i = \{J_{i,j} : x_{i,j} \in Y_i\}$ and $\mathcal{T}_i=\{J_{i,j}  : x_{i,j} \notin Y_i\}$ for $i\in\{1,\ldots,N\}$, and let $\mathcal{E} = \bigcup_i \mathcal{E}_i$ and $\mathcal{T}=\bigcup_i \mathcal{T}_i$. Then any schedule of the form $\mathcal{E}_1,\ldots,\mathcal{E}_N,\mathcal{T}$, where the order inside each subset of jobs is arbitrary, schedules all jobs in $\mathcal{E}$ early, and so the total weight of tardy jobs of such a schedule is at most the total weight of $\mathcal{T}$ which is 
$w(\mathcal{T}) = \sum_i w(\mathcal{T}_i) = \sum^N_{i=1} (N-i+1) \cdot t_i = W$.

Conversely, suppose there is a schedule for the jobs $J_{i,j}$ where the total weight of tardy jobs is at most $W$. Let $\mathcal{E}_{i}$ denote the set of early jobs in the schedule with due date $d_i$, for $i=\{1,\ldots,N\}$, and let $\mathcal{E}=\bigcup \mathcal{E}_i$. Then as the total weight of all jobs is $2W$, we have $w(\mathcal{E}) \geq W=\sum_i (N-i+1) \cdot t_i$. By our construction, this can only happen if we have $w(\mathcal{E}_i) \geq (N-i+1) \cdot t_i$ for each $i \in \{1,\ldots,N\}$, which in turn can only happen if $p(\mathcal{E}_i) \geq t_i$. Since all jobs in each $\mathcal{E}_i$ are early, we have $p(\mathcal{E}_i) \leq t_i$, and so $p(\mathcal{E}_i) = t_i$. It follows that for each $i \in \{1,\ldots,N\}$, the set $Y_i =\{ x_{i,j} : J_{i,j} \in \mathcal{E}_i\}=\{ p_{i,j} : J_{i,j} \in \mathcal{E}_i\}$ sums up to $t_i$. Thus we have found a solution for each Subset Sum instance $(X_i,t_i)$, and so the lemma follows.  \qed
\end{proof}

\subsubsection{\boldmath$1|Rej \leq R|\sum U_j$ and \boldmath$1|Rej \leq R|T_{max}$}

In scheduling with rejection problems~\cite{ShabtayGK13}, jobs $J_j$ are allowed not to be scheduled (\emph{i.e.} rejected) at the cost of $w_j$. Here we consider the case where the total cost of rejected jobs cannot exceed some prespecified bound $R$. Under this constraint, the $1|Rej \leq R|\sum U_j$ and $1|Rej \leq R|T_{max}$ problems focus on minimizing the number of tardy jobs $\sum U_j$ and the maximum tardiness of any job $T_{max}$, respectively. 

Note that there is a direct reduction from the $1||\sum w_jU_j$ problem to the $1|Rej \leq R|\sum U_j$ and $1|Rej \leq R|T_{max}$ problems: An instance of $1||\sum w_jU_j$ has a schedule with total weight at most $W$ if and only if there are jobs of total weight $R=W$ that can be rejected so that all remaining jobs can be scheduled early. Thus, the lemma below immediately follows from Lemma~\ref{lem:wjUj} above. 
\begin{lemma}
Assuming $\forall \exists$-SETH, both $1|Rej \leq R|\sum U_j$ and $1|Rej \leq R|T_{max}$ have no $\widetilde{O}(n+P_{max} \cdot n^{1-\eps})$-time algorithms, for any $\eps > 0$.  
\end{lemma}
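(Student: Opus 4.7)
The plan is to use the reduction sketched in the paragraph preceding the lemma: take the $1||\sum w_jU_j$ instance produced in the proof of Lemma~\ref{lem:wjUj} from $N$ Partition instances $(X_1,t_1),\ldots,(X_N,t_N)$, and re-interpret it as an instance of the two rejection-based scheduling problems by setting $R := W = \sum_{i=1}^N (N-i+1)\cdot t_i$. Keep the same jobs $J_{i,j}$ with processing time $p_{i,j}=x_{i,j}$, weight $w_{i,j}=(N-i+1)\cdot x_{i,j}$, due date $d_{i,j}=\sum_{\ell=1}^i t_\ell$, and rejection penalty equal to $w_{i,j}$. Then ask for a schedule with $\sum U_j = 0$ (for $1|Rej\leq R|\sum U_j$) or $T_{max}=0$ (for $1|Rej\leq R|T_{max}$) after rejecting a subset of total penalty at most $R$. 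Note both objective functions attain value $0$ on exactly the same schedules, namely those in which every non-rejected job is completed by its due date.

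The core equivalence to verify is: the $1||\sum w_jU_j$ instance admits a schedule with total tardy weight at most $W$ if and only if there exists a rejection set of total weight at most $W$ whose complement can be scheduled with every job early. The forward direction is immediate: take any schedule for the $1||\sum w_jU_j$ instance, reject precisely its tardy jobs, and schedule the early jobs in their original order. The reverse direction is equally direct: any schedule for the rejection instance with all remaining jobs early gives a schedule for the $1||\sum w_jU_j$ instance by appending the rejected jobs at the end in arbitrary order; these appended jobs are the only tardy ones and their total weight is at most $R=W$. By Lemma~\ref{lem:wjUj}, such a schedule exists in the $1||\sum w_jU_j$ instance if and only if all Partition instances $(X_i,t_i)$ are yes-instances, so the same characterization holds for the rejection instances.

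Finally, the parameters are preserved up to constants: the number of jobs is $n = O(Ns)$ and the maximum processing time is $P_{max} = O(t)$ exactly as in Lemma~\ref{lem:wjUj}, and the reduction runs in time polynomial in these quantities. Consequently, an $\widetilde{O}(n + P_{max}\cdot n^{1-\eps})$-time algorithm for either $1|Rej\leq R|\sum U_j$ or $1|Rej\leq R|T_{max}$ would yield an $\widetilde{O}(Ns + t(Ns)^{1-\eps})$-time algorithm for AND Partition, contradicting Corollary~\ref{cor:thecor} under $\forall\exists$-SETH.

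The ``hard part'' here is really nothing new; the technical work was already done in Lemma~\ref{lem:wjUj}. The only subtlety worth double-checking is that setting the rejection penalty of each job equal to its $1||\sum w_jU_j$ weight does not accidentally make rejection strictly cheaper than tardiness or vice versa, which is why we demand objective value exactly $0$ on the rejection side (so that tardiness is ruled out entirely and the only way to ``pay'' $W$ is via the rejection budget). \qed
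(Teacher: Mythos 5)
Your proof is correct and takes the same approach as the paper: reuse the jobs and parameters from Lemma~\ref{lem:wjUj}, set the rejection budget $R := W$, and ask for objective value $0$ in the rejection problem (so that every non-rejected job must be early), which makes the two instances equivalent by the reject-the-tardy / append-the-rejected argument. The paper states this reduction in two sentences just before the lemma; you have simply spelled it out, and the details check out.
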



\subsubsection{\boldmath$1|r_j \geq 0, Rej \leq R|C_{max}$}

In this problem, each job $J_j$ has a processing time $p_j$, a release date $r_j$, and a weight $w_j$, and the goal is to find a schedule that rejects jobs with total weight at most $R$ and minimizes the makespan of the remaining non-rejected jobs.

\begin{lemma}
There is no $\widetilde{O}(n+P_{max} \cdot n^{1-\eps})$-time algorithm for $1|r_j \geq 0, Rej \leq R|C_{max}$, for any $\eps > 0$, unless $\forall \exists$-SETH is false. 
\end{lemma}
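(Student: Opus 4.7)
The plan is to adapt the $P_2|r_j \geq 0|C_{max}$ reduction from the preceding lemma, simulating the second machine via the rejection mechanism. Given the $N$ Partition instances $(X_1,t_1),\ldots,(X_N,t_N)$, I introduce a job $J_{i,j}$ for each element $x_{i,j} \in X_i$, with processing time and weight both equal to $x_{i,j}$, and release date $r_{i,j} = \sum_{\ell<i} t_\ell$. Writing $T := \sum_i t_i$, I set the rejection budget to $R := T$ and ask for a schedule of makespan at most $C := T$. The intuition is that each instance $(X_i,t_i)$ must split into $t_i$ units of accepted processing (fitting the window of length $t_i$ carved out between consecutive release-date boundaries and the makespan deadline) and $t_i$ units of rejected processing (consuming its share of the rejection budget), and this is exactly the Partition condition.

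For the forward direction, given solutions $Y_i \subseteq X_i$ with $\sum_{x \in Y_i} x = t_i$, I reject precisely the jobs corresponding to $X_i \setminus Y_i$. Since Partition forces $\sum_{x \in X_i \setminus Y_i} x = t_i$ as well, the total rejected weight is $\sum_i t_i = R$. The remaining jobs from instance $i$ have total processing time $t_i$ and release date $\sum_{\ell<i} t_\ell$, so I schedule them back-to-back inside the interval $[\sum_{\ell<i} t_\ell,\, \sum_{\ell\leq i} t_\ell]$, producing makespan exactly $C$.

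For the converse, let $A_i$ denote the accepted jobs from instance $i$ and $p(A_i)$ their total processing time. Since weights coincide with processing times, the rejection bound gives $\sum_i p(A_i) \geq 2T - R = T$, while the release-date constraint combined with makespan at most $C = T$ yields $\sum_{i'\geq i} p(A_{i'}) \leq C - \sum_{\ell<i} t_\ell = \sum_{i'\geq i} t_{i'}$ for every $i$. Specializing the upper bound to $i=1$ and comparing with the lower bound forces $\sum_i p(A_i) = T$; a short induction on $i$ then pins down $p(A_i) = t_i$ for every $i$, so the multiset $\{x_{i,j} : J_{i,j} \in A_i\}$ certifies $(X_i,t_i)$ as a yes-instance. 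The constructed scheduling instance has $n = O(Ns)$ jobs and $P_{max} \leq t$, so any $\widetilde{O}(n + P_{max}\cdot n^{1-\eps})$-time algorithm would yield an $\widetilde{O}(Ns + t(Ns)^{1-\eps})$-time algorithm for AND Partition, contradicting Corollary~\ref{cor:thecor} under $\forall\exists$-SETH.

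The main point requiring care is the backward induction: the rejection budget lower-bounds the \emph{total} accepted work, while the release-date-plus-makespan constraints upper-bound every \emph{suffix sum} of accepted work, and the exact matching of these two bounds at the first index leaves no slack, forcing equality instance by instance. Once this coupling is set up correctly, the reduction is essentially identical in spirit to the two-machine release-date reduction, with the single-machine rejection budget playing the role of the second machine's available time.
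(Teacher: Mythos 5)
The high-level plan (simulating the second machine with the rejection budget) is right, and the forward direction is fine, but the converse direction of your reduction fails: with uniform weights $w_{i,j}=x_{i,j}$, the rejection budget only constrains the \emph{total} accepted processing time, whereas you need to pin down the accepted amount of \emph{each} instance separately. Your constraints are $S_i := \sum_{i'\geq i} p(A_{i'}) \leq T_i := \sum_{i'\geq i} t_{i'}$ for all $i$, together with $S_1 \geq T_1$, hence $S_1 = T_1$; but this does not propagate to $S_i = T_i$ for $i \geq 2$, so the claimed ``short induction'' has nothing to push off. A concrete counterexample: take $N=2$, $X_1=\{1,1\}$, $X_2=\{2\}$, so $t_1=t_2=1$, $T=2$, and $(X_2,t_2)$ is a Partition no-instance. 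In your reduction, accepting both jobs of instance~$1$ (they fill $[0,2]$, makespan $=C=2$) and rejecting the single job of instance~$2$ (weight $2=R$) is a valid schedule, falsely certifying the AND.

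The paper fixes exactly this gap by scaling the weights by the instance index: it sets $w_{i,j} = i \cdot x_{i,j}$ and $R = \sum_i i \cdot t_i$. Then accepted weight satisfies $\sum_i i\cdot p(A_i) \geq R = \sum_i i\cdot t_i$, and combined with your suffix bounds $S_i \leq T_i$ via the summation identity $\sum_i i\,p(A_i) = \sum_j S_j \leq \sum_j T_j = \sum_i i\,t_i$, every suffix inequality must be tight, which gives $p(A_i)=t_i$ for all $i$. In the counterexample this scaling kills the spurious schedule: rejecting the job of instance~$2$ would cost $2\cdot 2 = 4 > R = 1\cdot 1 + 2\cdot 1 = 3$. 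So you should replace $w_{i,j}=x_{i,j}$ by $w_{i,j}=i\cdot x_{i,j}$, set $R=\sum_i i\cdot t_i$, and redo the converse along these lines.
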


\begin{proof}
Let $(X_1,t_1),\ldots,(X_N,t_N)$ be $N$ instances of Partition. For each $i \in \{1,\ldots,N\}$, and for each $x_{i,j} \in X_i$, we create a job $J_{i,j}$ with: 
\begin{itemize}
\item processing time $p_{i,j}=x_{i,j}$, 
\item weight $w_{i,j} = i \cdot x_{i,j}$,
\item and release date $r_{i,j} = r_i = \sum^{i-1}_{\ell=1}t_\ell$. 
\end{itemize}
We argue that there is a schedule for all jobs $J_{i,j}$ with makespan at most $C=\sum_i t_i$ that rejects jobs with cost at most $R=\sum_i i \cdot t_i$ if and only if each Partition instance $(X_i,t_i)$ is a yes-instance. 

Suppose that each $X_i$ has a subset $Y_i \subseteq X_i$ which sums up to $t_i$. Let $\mathcal{E}_i = \{J_{i,j} : x_{i,j} \in Y_i\}$ and $\mathcal{T}_i=\{J_{i,j}  : x_{i,j} \notin Y_i\}$ for $i\in\{1,\ldots,N\}$, and let $\mathcal{E} = \bigcup_i \mathcal{E}_i$ and $\mathcal{T}=\bigcup_i \mathcal{T}_i$. Then any schedule of the form $\mathcal{E}_1,\ldots,\mathcal{E}_N$, where the jobs in $\mathcal{T}$ are rejected, respects all release dates of jobs in $\mathcal{E}$, and has makespan $C_{max}=\sum_i t_i = C$. Moreover, the total cost of the rejected jobs is $w(\mathcal{T}) = \sum_i w(\mathcal{T}_i) = \sum^N_{i=1} i \cdot t_i = R$.

Conversely, suppose there is schedule for the jobs $J_{i,j}$ that respects all release dates, rejects jobs with weight at most $R$, and has makespan at most $C$. Let $\mathcal{E}_{i}$ denote the set of non-rejected jobs with release date $r_i$, for $i=\{1,\ldots,N\}$, and let $\mathcal{E}=\bigcup \mathcal{E}_i$. Then as the total weight of all jobs is $2R$, we have $w(\mathcal{E}) \leq R=\sum_i i \cdot t_i$. By our construction, this can only happen if we have $w(\mathcal{E}_i) \geq i \cdot t_i$ for each $i \in \{1,\ldots,N\}$, which in turn can only happen if $p(\mathcal{E}_i) \geq t_i$. On the other hand, the  release date $r_{i+1}$ of jobs in $\mathcal{E}_{i+1}$ can be respected only if $p(\mathcal{E}_i) \leq r_{i+1} = \sum_{\ell = 1}^i t_\ell$, and so $p(\mathcal{E}_i) = t_i$. It follows that for each $i \in \{1,\ldots,N\}$, the set $Y_i =\{ x_{i,j} : J_{i,j} \in \mathcal{E}_i\}=\{ p_{i,j} : J_{i,j} \in \mathcal{E}_i\}$ sums up to $t_i$. Thus, we have found a solution for each Subset Sum instance $(X_i,t_i)$, and so the lemma follows.  \qed
\end{proof}


\section{Conclusion} 

In this paper we considered the AND Subset Sum problem: Given $N$ instances of Subset Sum, determine whether all instances are yes-instances. We showed that the problem is essentially as hard as solving $N$ Subset Sum instances independently, and then used this result to strengthen existing lower bounds for several scheduling problems.  Our research is closely related to the question of whether Subset Sum on input $(X,t)$ can be solved in time $\widetilde{O}(\max_{x \in X} x + |X|)$, which is currently a central open problem in the area~\cite{AxiotisBJTW19,EisenbrandW18,GalilMargalit91,KoiliarisX17,Pisinger99}. Our results answer this question in the negative for several generalizations of Subset Sum. We believe that the line of thought in this paper can provide other results in a similar vein.

Observe that almost all scheduling problems considered in this paper do not have a matching upper-bound of $\widetilde{O}(P_{max} \cdot n)$ to the lower bound constructed in Section~\ref{sec:SLB}. The exception is $P_2|\text{level-order}|C_{max}$ which can be solved in time $O(P_{max} \cdot n)$ by using the known $O(P_{max} \cdot n)$-time Subset Sum algorithm~\cite{Pisinger99} (or the faster algorithms given in~\cite{Bring17,KoiliarisX17}) on each class of jobs $\mathcal{J}_i$ separately. It would be very interesting to close the gap for other problems listed in Theorem~\ref{thm:lowerbounds}. This could be done by either devising an $\widetilde{O}(P_{max} \cdot n)$-time algorithm for the problem, or by strengthening our lower bound mechanism.


\bibliographystyle{plain}
\bibliography{biblio}

\end{document}